\newcommand{\pbDef}[3]{
    \begin{center}
        \begin{boxedminipage}{\textwidth}
            #1
            \smallskip\\
            \begin{tabular}{lp{0.99 \textwidth - \widthof{~~~Question: } - 2.6pt}}
                Input: & #2\\
                Question: & #3
            \end{tabular}
        \end{boxedminipage}
    \end{center}
}
\DeclarePairedDelimiter\abs{\lvert}{\rvert}
\DeclarePairedDelimiter{\ceil}{\lceil}{\rceil}
\DeclarePairedDelimiter{\floor}{\lfloor}{\rfloor}
\title{Minimizing and Computing the Inverse Geodesic Length on Trees} 
\author{Serge Gaspers}{UNSW Sydney, Australia; and \and Data61, CSIRO, Sydney, Australia}{sergeg@cse.unsw.edu.au}{https://orcid.org/0000-0002-6947-9238}{Serge Gaspers is the recipient of an Australian Research Council (ARC) Future Fellowship (FT140100048).}
\author{Joshua Lau\footnote{Corresponding author}}{UNSW Sydney, Australia}{joshua.lau@unsw.edu.au}{https://orcid.org/0000-0001-7490-633X}{}
\authorrunning{S. Gaspers and J. Lau}
\keywords{Trees, Treewidth, Fixed-Parameter Tractability, Inverse Geodesic Length, Vertex deletion, Polynomial
multiplication, Distance distribution}
\begin{document}

\maketitle

\begin{abstract}

For any fixed measure $H$ that maps graphs to real numbers, the \textsc{MinH} problem is defined as follows: given a graph $G$,
an integer $k$, and a target $\tau$, is there a set $S$ of $k$ vertices that can be deleted, so that $H(G - S)$ is at
most $\tau$? In this paper, we consider the \textsc{MinH} problem on trees.

We call $H$ \emph{balanced on trees} if, whenever $G$ is a tree, there is an optimal choice of $S$ such that the
components of $G - S$ have sizes
bounded by a polynomial in $n / k$.
We show that \textsc{MinH} on trees is Fixed-Parameter Tractable (FPT) for parameter $n / k$, and
furthermore, can be solved in subexponential time, and polynomial space,
whenever $H$ is additive, balanced on trees, and computable in polynomial time.

A particular measure of interest is the Inverse Geodesic Length (IGL), which is used to gauge the efficiency and connectedness of a graph.
It is defined as the sum of inverse distances between every two vertices: $IGL(G)
= \sum_{\{u,v\} \subseteq V} \frac{1}{d_G(u,v)}$.
%
While \textsc{MinIGL} is $W[1]$-hard for parameter treewidth, and cannot be solved in $2^{o(k + n + m)}$ time, even on
bipartite graphs with $n$ vertices and $m$ edges, the complexity status of the problem remains open in the case
where $G$ is a tree. We show that IGL is balanced on trees, to give
a $2^{O\left((n \log n)^{5/6}\right)}$ time, polynomial space algorithm.

The \emph{distance distribution} of $G$ is the sequence $\{a_i\}$ describing
the number of vertex pairs distance $i$ apart in $G$: $a_i = \abs{\{ \{u, v\}:
d_G(u, v) = i\} }$.  Given only the distance distribution, one can easily
determine graph parameters such as diameter, Wiener index, and particularly,
the IGL.  We show that the distance distribution of a tree can be computed in
$O(n \log^2 n)$ time by reduction to polynomial multiplication.  We also extend
the result to graphs with small treewidth by showing that the first $p$ values
of the distance distribution can be computed in $2^{O(\text{tw}(G))} n^{1 +
\varepsilon} \sqrt{p}$ time, and the entire distance distribution can be computed in $2^{O(\text{tw}(G))} n^{1 +
\varepsilon}$ time, when the diameter of $G$ is $O(n^{\varepsilon'})$ for every $\varepsilon' > 0$.

\end{abstract}

\section{Introduction}

The \emph{Inverse Geodesic Length (IGL)} is a widely-used measure for
quantifying the connectedness and efficiency of a given graph or network. In mathematical chemistry, it is
also known as the \emph{Harary Index} \cite{Zhou2008}, and in network science as the \emph{(global)
efficiency} \cite{Crucitti2003}.

To test the resilience of a graph to vertex failures, the problem of minimizing
a particular measure by deleting a fixed number of vertices has been studied
extensively \cite{Morone2015, Kovacs2015, Hossain2013}. In these cases,
heuristics have been used to choose which vertices to delete, and their effect
has been assessed using the chosen measure. In particular, Szczep\'{a}nski et
al.  \cite{Szczepanski2016} chose IGL as the measure to be minimized when
examining this problem. Nonetheless, only recently has the exact optimization
problem itself (\textsc{MinIGL}) been studied.

Veremyev et al. \cite{Veremyev2015} formulated \textsc{MinIGL} as a special
case of the Distance-Based Critical Node Detection Problem (DCNP), and
reduced the problem
to Integer Linear Programming.
Aziz et al. \cite{ijcai2017-108} observed that \textsc{MinIGL} is NP-complete, since it corresponds to \textsc{Vertex
Cover} when $\tau=0$,
but it is also both NP-complete, and $W[1]$-hard for parameter $k$, on both split
and bipartite graphs. Najeebullah \cite{Najeebullah2018} showed that, under the Exponential Time
Hypothesis of Impagliazzo and Paturi \cite{Impagliazzo2001a}, \textsc{MinIGL} cannot be solved in
 $2^{o(k + n + m)}$ time, even on bipartite graphs. On the positive side, it was shown that \textsc{MinIGL} is
Fixed-Parameter Tractable (FPT) for parameter twin (or vertex) cover number,
and also for $\omega + k$, where $\omega$ is the neighbourhood diversity of the
graph. In another paper, Aziz et al.  \cite{aamas2018-gaspers} showed that
\textsc{MinIGL} is $W[1]$-hard for parameter treewidth.
The complexity status of \textsc{MinIGL} when the input graph is a tree was stated as an open question by Aziz et al. \cite{ijcai2017-108,aamas2018-gaspers}, and in open problem sessions of IWOCA 2017 and the Sydney Algorithms Workshop 2017.

In Section~\ref{sec:minigl-trees}, we examine \textsc{MinIGL} on trees, giving the following results.
\begin{restatable}{theorem}{thmfptstatement}
  \label{thm:fptstatement}
  \textsc{MinIGL} is FPT for parameter $n/k$ on trees.
\end{restatable}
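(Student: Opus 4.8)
The plan is to reduce Theorem~\ref{thm:fptstatement} to two ingredients. The first is a structural claim: $IGL$ is \emph{balanced on trees}, i.e.\ some optimal deletion set $S$ (with $|S|=k$) leaves only components of size at most $q := \text{poly}(n/k)$. The second is an algorithmic claim: given this promise, \textsc{MinIGL} on an $n$-vertex tree can be decided in $2^{O(q)}\cdot n^{O(1)}$ time. Plugging in $q=\text{poly}(n/k)$ makes the running time $f(n/k)\cdot n^{O(1)}$, which is exactly what FPT for parameter $n/k$ asks for. Two elementary facts underpin both ingredients and I would record them first: $IGL$ is \emph{additive} — pairs of vertices in different components of a forest contribute $0$, so $IGL(G-S)=\sum_{C}IGL(C)$ over the components $C$ of $G-S$ — and $IGL$ is monotone non-increasing under vertex deletion, so the budget constraint ``$|S|=k$'' may be relaxed to ``$|S|\le k$''.

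For the algorithm I would root $G$ arbitrarily and run a bottom-up dynamic program whose state is a triple: a vertex $v$; the isomorphism type $\sigma$ of the connected fragment that $v$'s component induces inside the subtree $T_v$, which is a rooted tree on at most $q$ vertices (or a sentinel indicating that $v$ itself is deleted); and the number $j\le k$ of deleted vertices in $T_v$. The value stored is the minimum, over all deletion patterns in $T_v$ consistent with $(\sigma,j)$, of the total $IGL$ of the components that already lie entirely within $T_v$. Transitions process the children of $v$ one at a time, gluing each child's fragment under $v$'s current fragment, combining the deletion counts, discarding any branch in which the fragment would exceed $q$ vertices (sound by the structural claim), and, whenever a child is deleted, finalising that child's fragment as a complete component and adding its $IGL$; the values $IGL(\sigma)$ for the $2^{O(q)}$ rooted-tree types on at most $q$ vertices are precomputed. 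At the root the root's own fragment is finalised and the optimum is read off. Each merge costs $2^{O(q)}\cdot n^{O(1)}$, for a total of $2^{O(q)}\cdot n^{O(1)}$, which is $f(n/k)\cdot n^{O(1)}$.

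The main obstacle is proving that $IGL$ is balanced on trees, and here I would use a local exchange argument. Among all optimal solutions, fix $S$ that also minimises an auxiliary potential such as $\sum_C|C|^2$, and suppose for contradiction that some component $C$ has $|C|>q$. Deleting a centroid $w$ of $C$ splits $C$ into pieces of size at most $|C|/2$, strictly lowers the potential, and lowers $IGL$ by $\Omega(|C|)$ — one lower-bounds this by the inverse-distance mass crossing $w$, whose extremal (smallest) case is essentially a path, where it is already linear in $|C|$. To restore $|S|=k$ one then re-inserts some vertex $w'\in S$, merging the components of $G-S$ that are adjacent to $w'$ and raising $IGL$. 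The crux is to show that $w'$ can be chosen so that this increase is strictly below the $\Omega(|C|)$ decrease \emph{and} so that the potential still strictly drops, contradicting the choice of $S$. Establishing that such a $w'$ exists would combine a counting argument over the at most $n-1$ tree edges joining $S$ to $V\setminus S$ — to locate a vertex of $S$ that borders only few, small components — with a careful comparison of the (possibly only logarithmic) per-vertex and cross-pair inverse-distance contributions on the two sides of the swap. Pushing this balancing through, and thereby pinning down the precise polynomial $q=\text{poly}(n/k)$ for which the exchange always succeeds, is where the real technical work lies.
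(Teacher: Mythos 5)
Your overall structure matches the paper's: establish that $IGL$ is additive, establish that some optimal deletion set is $\mathrm{poly}(n/k)$-trimming (the paper's ``balanced on trees'' notion), and then run a subtree dynamic program whose state includes the rooted isomorphism type of the fragment containing the current vertex. Your DP is essentially the paper's Lemma~\ref{lem:expl-h-tree}, with the $2^{O(q)}$ factor coming from enumerating ordered trees on at most $q$ vertices; that part is fine.

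Where you diverge --- and where the gap is --- is the balance proof. The paper does not use a one-shot exchange argument. It first proves a \emph{single-vertex} structural lemma (Theorem~\ref{largest-component-min-size}): if $v$ is the single best vertex to delete in a tree, then the vertices \emph{outside} the largest component of $T-v$ number at least $cn^{1/4}$. Its proof uses the utility machinery (total utility $\binom{n}{2}+IGL(T)$, hence some vertex has utility $\ge n/2$; utility of any vertex is at most $IGL$), plus a centroid of the set of near-$v$ vertices to exhibit a strictly better deletion if the tree were too lopsided. Only then does it pass to $k$ deletions (Theorem~\ref{thm:largest-component-max-size}): collapse the remaining components to form a tree $T'$ over $S$ and the components, root $T'$ at the largest component $R_r$, and use a pigeonhole to find $s\in S$ whose \emph{children} components total fewer than $n/k$ vertices. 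The single-vertex lemma then bounds the \emph{parent} component of $s$, and a further utility comparison against $R_r$ closes the argument with exponent $5$.

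Your exchange argument has a concrete hole exactly at the step the paper works hardest on. You want to delete a centroid $w$ of the big component $C$ and re-insert a $w'\in S$ that ``borders only few, small components,'' so that the $IGL$ increase from re-inserting $w'$ is dominated by the $\Omega(|C|)$ decrease. But no such $w'$ need exist: every vertex of $S$ can be adjacent to a large remaining component. (In the collapsed tree $T'$, the pigeonhole bounds the \emph{children} of some $s$ after rooting at the largest component; it says nothing about the \emph{parent} component of $s$, and that parent can be huge.) Re-inserting such an $s$ would merge a small collection of fragments with a large one, and the resulting $IGL$ gain is not $\mathrm{poly}(n/k)$ a priori. You acknowledge that ``pushing this balancing through'' is where the real technical work lies, but as written the step that selects $w'$ and bounds the re-insertion cost does not go through. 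The paper sidesteps this by bounding the parent component indirectly via the single-vertex lemma, rather than insisting on a uniformly small neighborhood. If you want to stick with the exchange framing, you would need to reproduce something equivalent to Theorem~\ref{largest-component-min-size} to control the one potentially large neighbor of your chosen $w'$; the potential $\sum_C |C|^2$ by itself does not give you that control.
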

\begin{restatable}{theorem}{thmsubexpstatement}
  \label{thm:subexp-statement}
  There is a $2^{O((n \log n)^{5/6})}$ time, $O(n^3)$ space algorithm for
  \textsc{MinIGL} on trees, on a real RAM.
\end{restatable}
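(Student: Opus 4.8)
The plan is to derive the theorem from the general \textsc{MinH}-on-trees machinery for measures that are additive over components, computable in polynomial time, and balanced on trees, instantiated at $H = IGL$, and then to read off the running time from the polynomial that witnesses balancedness for $IGL$. Additivity is immediate, since pairs of vertices in distinct components are at infinite distance and contribute $0$, so $IGL(G) = \sum_C IGL(C)$ over the components $C$; and $IGL$ is polynomial-time computable, in fact in $O(n\log^2 n)$ time via the distance distribution. The substantive hypothesis is that $IGL$ is balanced on trees, which I would prove by an exchange argument: if an optimal deletion set $S$ with $\abs{S} = k$ left a component $C$ whose size exceeds a suitable polynomial in $n/k$, then $C$ dwarfs the average of the at most $n$ surviving vertices spread over the at least $k$ sides of $S$, and one could trade a carefully chosen vertex of $S$ for a balanced separator vertex of $C$; since $IGL(C) = \Omega(\abs{C})$ (there are $\Omega(\abs{C}^2)$ pairs in $C$, many at constant distance) and splitting $C$ roughly halves this, while reinstating the traded vertex only rejoins the components it bordered, the net change in $IGL(G-S)$ can be made negative once the polynomial has large enough degree, contradicting optimality. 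Pinning down this bookkeeping --- in particular that degree $5$ suffices --- is one of the two delicate points, and the chosen degree is exactly what the running-time analysis below consumes.

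Granting balancedness, the algorithm is a dichotomy on $k$ around the threshold $k_0 = \Theta\big((n^5/\log n)^{1/6}\big)$. If $k \le k_0$, enumerate all $\binom{n}{k} = 2^{O(k\log n)}$ size-$k$ subsets $S$, compute $IGL(G - S)$ in polynomial time for each, and return the minimum; this uses polynomial space and runs in $2^{O(k\log n)} = 2^{O((n\log n)^{5/6})}$ time. If $k > k_0$, then $n/k < n/k_0 = \Theta\big((n\log n)^{1/6}\big)$, so by balancedness there is an optimal $S$ in which every component has size at most $s = O((n/k)^5) = O((n\log n)^{5/6})$, and I would compute the optimum exactly by a rooted tree DP whose state at a vertex $v$ records the number of deletions used in the subtree of $v$ together with the \emph{distance profile} of the \emph{active component} of $v$: the vector counting, for each $d \le s$, how many vertices of the component of $v$ in $G-S$ restricted to the subtree of $v$ lie at distance $d$ from $v$. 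Additivity makes the IGL of every already-closed component, and the internal IGL of the active component, a sunk cost that is folded into the DP value, and when active components merge at a vertex the newly committed cross-pairs are a function of the two profiles alone. There are only $2^{O(s)}$ profiles, so the DP has $2^{O(s)}\cdot n^{O(1)}$ states and can be evaluated in $2^{O(s)} = 2^{O((n\log n)^{5/6})}$ time on a real RAM, which compares and adds the irrational values $1/d$ in unit time. Choosing $k_0$ to equalise the exponents $k\log n$ and $(n/k)^5$ then gives the stated running time.

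The main obstacle is \emph{space}: the naive tabulation of the tree DP stores $2^{O(s)}$ values per vertex, which is super-polynomial once $s$ is polynomially large, so the regime $k > k_0$ needs a genuinely space-efficient evaluation. (When $G$ is a path the active component is itself a path, so its profile collapses to a single number, the DP has only $O(n^3)$ states, and the issue does not arise.) For general $G$ I would evaluate the DP recursively over a balanced recursive decomposition of the tree of depth $O(\log n)$ --- using, e.g., a standard binarisation with zero-weight dummy edges so that balanced edge separators are available --- passing across each separator only the $O(1)$ active-component profiles that cross it and recomputing sub-results rather than memoising them; each profile occupies $O(s) = O(n)$ space, at most $O(\log n)$ frames are live simultaneously, and a careful accounting of per-frame scratch space yields the claimed $O(n^3)$ bound. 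I expect the two genuinely delicate points to be (i) the exchange argument establishing balancedness with a polynomial of degree $5$, and (ii) organising this space-efficient recursion so that it transmits exactly the information needed across each separator and so that recomputation does not introduce an extra logarithmic factor into the exponent of the running time.
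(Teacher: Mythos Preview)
Your high-level plan (dichotomy on $k$, balancedness of $IGL$ with exponent $t_{IGL}=5$, and a tree DP exploiting the bound on component size) matches the paper. The threshold computation and the brute-force branch are fine. But the large-$k$ branch diverges from the paper in a way that creates the very space problem you then struggle to repair.

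Your DP carries the \emph{distance profile} of the active component as part of the state, giving $2^{O(s)}$ states and hence exponential space. Your proposed fix --- a balanced recursive decomposition that passes only $O(1)$ profiles across each separator --- does not obviously work: to optimise you must still range over all $2^{O(s)}$ possible profiles of the crossing component at each separator, and doing this by recomputation at $O(\log n)$ nested levels threatens a $2^{O(s\log n)}$ blow-up in time (precisely the ``extra logarithmic factor in the exponent'' you flag). You have not explained how to avoid both exponential space and this time blow-up simultaneously, and I do not see a clean way to make the profile-in-state DP polynomial-space.

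The paper sidesteps this entirely by \emph{externalising} the enumeration. Rather than recording the active component's shape in the DP state, it loops over all ordered rooted trees $T'$ on at most $L$ vertices (there are $O(4^L/\sqrt{L})$ of them by the Catalan bound), and for each fixed $T'$ runs a purely polynomial-space DP that tries to embed $T'$ at the root and optimally distributes the deletion budget over the unmatched subtrees. Since $H(T')$ is known once $T'$ is fixed, the inner DP need only track $(u,i,b',u',j)$ --- current vertex, child index, remaining budget, and position in $T'$ --- which is $O(nkL)$ states. The outer loop over $T'$ stores nothing beyond the best value seen, so total space is $O(nkL + L^\beta) = O(n^3)$. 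This is the missing idea: fix the component structure first, then match, rather than discovering it inside the DP.

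Your balancedness sketch is also thinner than what is needed. The paper's argument is not a direct swap of one $s\in S$ for a separator of the big component; it first proves a single-deletion lemma (removing the optimal vertex from any tree on $n$ vertices leaves every component with at least $\Omega(n^{1/4})$ vertices outside it), via a careful comparison of the optimal vertex's utility against a centroid of the near vertices, and then bootstraps this to general $k$ by a pigeonhole on the contracted tree $T'$. The exponent $5$ comes from composing the $1/4$ from the single-deletion lemma with this bootstrap, not from a one-step exchange.
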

\noindent
To do so, we prove more general versions of these results, for the \textsc{MinH} problem in the case when $H$ is
additive, balanced on trees, and computable in polynomial time.

We give a Dynamic Programming (DP) algorithm that solves \textsc{MinIGL}
by matching ordered trees to the structure of the given tree, to give a forest
with $n - k$ vertices and minimum IGL. The running time of this algorithm is
exponential in $L$, but polynomial in $n$, where $L$ is the size of the largest
tree in this forest.
Since $H$ is balanced, $L$ is bounded by a polynomial in $n/k$, so \textsc{MinH} is FPT for parameter $n/k$.
Proving that IGL is balanced on trees then gives \autoref{thm:fptstatement}.
Choosing this DP algorithm when $k$ is large compared to $n$, and a simple brute-force algorithm otherwise, gives
\autoref{thm:subexp-statement}.


\medskip
IGL has been used to identify key protein residues \cite{Bode2007},
compare the robustness of botnet structures \cite{4413000}, and assess the
impact of attacks on power grids \cite{Zhu2014}. Thus, the ability to
compute the IGL of a graph efficiently serves practical purpose in identifying
characteristics of real-world networks.

Since the IGL of a graph can easily be computed from its distance distribution,
we examine the problem of computing the distance distribution of trees.
By combining the relatively well-known techniques of centroid decomposition and
fast polynomial multiplication, we obtain the following result on trees.

\begin{restatable}{theorem}{thmdisttrees}
  \label{thm:dist-trees}

  The distance distribution of a tree with $n$ vertices can be computed in
  $O(n \log^2 n)$ time on a log-RAM.
\end{restatable}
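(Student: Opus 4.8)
The plan is to combine a centroid decomposition of the input tree $T$ with fast polynomial multiplication, counting each pair of vertices exactly once at the centroid lying on the path between them.

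Recall the centroid decomposition: repeatedly pick a centroid $c$ of the current subtree $C$ (a vertex whose removal from $C$ leaves pieces of size at most $|C|/2$), record $c$, and recurse independently on each component of $C-c$. The recursion has depth $O(\log n)$, and across any fixed level the current subtrees are pairwise vertex-disjoint, hence of total size at most $n$. The key fact is that for every pair $\{u,v\}$ of distinct vertices there is a \emph{unique} recorded centroid $c$ such that, when $c$ is picked, both $u$ and $v$ lie in $C$ but in different components of $C-c$, or one of $u,v$ equals $c$. For this $c$ the $u$--$v$ path stays inside $C$ and passes through $c$, so $d_T(u,v)=d_T(u,c)+d_T(c,v)$ (distances inside $C$ agree with those in $T$, since trees have unique paths). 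Thus it suffices, at each recorded centroid, to count the pairs whose path passes through it, indexed by length, and to accumulate these counts into the answer.

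At a centroid $c$ of a subtree $C$ with $|C|=n_c$, a BFS from $c$ inside $C$ (cost $O(n_c)$) yields, for each component $T_1,\dots,T_r$ of $C-c$, the polynomial $P_i(x)=\sum_{j\ge 1} b_{i,j}\,x^{j}$, where $b_{i,j}$ counts the vertices of $T_i$ at distance $j$ from $c$. With $Q(x)=1+\sum_{i=1}^{r}P_i(x)$, the number of through-$c$ pairs at distance exactly $d$ (including pairs $\{c,v\}$) is the coefficient of $x^{d}$ in
\[
  \frac{1}{2}\Bigl( Q(x)^2 - 1 - \sum_{i=1}^{r} P_i(x)^2 \Bigr),
\]
because this polynomial equals $\sum_i P_i + \sum_{i<i'}P_iP_{i'}$. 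Forming $Q$ costs $O(n_c)$; squaring $Q$ is one multiplication of degree $O(n_c)$; and since $\sum_i \deg P_i \le n_c$ and each $\deg P_i\le n_c$, squaring all the $P_i$ costs $\sum_i O(\deg P_i\log\deg P_i)=O(n_c\log n_c)$. Every coefficient stays below $n^2$, hence fits in $O(1)$ machine words, and on a log-RAM multiplying two degree-$d$ polynomials with nonnegative, polynomially bounded integer coefficients takes $O(d\log d)$ time. Summing $O(n_c\log n_c)$ over the centroids at each of the $O(\log n)$ levels (total size $O(n)$ per level) yields the claimed $O(n\log^2 n)$ running time.

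The points requiring care are the bookkeeping that each pair is charged once, to one centroid, with the correct distance --- which follows from the unique-separating-centroid property --- and the amortization of the individual $P_i$-squarings against $\sum_i \deg P_i\le n_c$, which is precisely what keeps the bound at two logarithmic factors rather than three (a naive balanced product of the $P_i$ would lose a factor). The $O(d\log d)$-time polynomial multiplication with small coefficients is where the log-RAM model enters.
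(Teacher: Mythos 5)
Your proposal is correct and follows essentially the same approach as the paper: centroid decomposition combined with Kronecker-substitution-style fast polynomial multiplication, accounting for each pair at the unique separating centroid via the identity $Q^2 - \sum_i P_i^2$ (up to the factor of two and whether the centroid's $x^0$ term is folded in as $P_0$). The only slight imprecision is stating the per-multiplication cost as $O(d\log d)$ rather than $O(d\log m)$ where $m$ bounds the product's coefficients (here $m = O(n_c^2)$), but since $\sum_i \deg P_i \le n_c$ this does not affect the $O(n\log^2 n)$ total.
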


We extend this result to graphs with small treewidth.
This is of practical note, as real-world graphs for which IGL is an indicator
of strength -- such as electrical grids \cite{Atkins2009} and road transport
networks \cite{maniu2018experimental} -- have been found to have relatively
small treewidth.

The distance distribution of a graph can be trivially computed from the All
Pairs Shortest Paths (APSP). The output of APSP is of size $n^2$, so any APSP
algorithm requires $\Omega(n^2)$ time. On graphs with treewidth $k$, APSP can
be computed in $O(kn^2)$ time \cite{Planken2012}, so we seek algorithms that
find the distance distribution with a subquadratic dependence on $n$.
Abboud et al. \cite{Abboud2016} proved that, under the Orthogonal Vectors
Conjecture (OVC), there is no algorithm that distinguishes between graphs of
diameter 2 and 3 in $2^{o(k)} n^{2 - \varepsilon}$ time.  Williams
\cite{Williams2004} showed that the OVC is implied by the Strong Exponential
Time Hypothesis (SETH) of Impagliazzo, Paturi and Zane \cite{Impagliazzo2001a,
Impagliazzo2001}. Since the distance distribution of a graph immediately gives
its diameter, this hardness result also applies to computing the distance
distribution.  We prove the following result.

\begin{restatable}{theorem}{thmdistpref}
  \label{thm:dist-pref}

  The prefix $a_1, \dots, a_p$ of the distance distribution of a graph with $n$
  vertices and treewidth $k$ 
  can be computed in $2^{O(k)} n^{1 + \varepsilon} \sqrt{p}$ time on a log-RAM,
  for any $\varepsilon > 0$.
\end{restatable}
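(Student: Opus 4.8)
The plan is to combine a recursive decomposition of $G$ along small separators drawn from a tree decomposition with fast polynomial multiplication. First I would compute, in $2^{O(k)}n$ time, a tree decomposition of $G$ of width $O(k)$ (via a constant-factor approximation), and use the standard fact that any width-$O(k)$ tree decomposition of a subgraph contains a bag whose removal leaves parts of at most a constant fraction of the vertices. This gives a balanced recursive scheme: given a subproblem graph $G'$ (initially $G$), pick such a bag $B$, $|B| = O(k)$, and write $V(G') = A \cup C$ with $A \cap C = B$, no edges between $A \setminus B$ and $C \setminus B$, and $|A|, |C| \le \alpha|V(G')|$.

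Second, I classify the pairs $\{u,v\} \subseteq V(G')$. Pairs with both endpoints in $A$ (resp.\ $C$) are handled recursively on $G_A$ (resp.\ $G_C$), where $G_A$ is $G'[A]$ augmented with an edge of weight $d_{G'}(b,b')$ between every two $b, b' \in B$; the standard fact that these shortcut edges preserve $G'$-distances within $A$, together with the fact that adding a clique on a bag keeps the treewidth $\le k$, makes the recursion correct, and since only distances $\le p$ matter the shortcut weights can be capped at $p$. Pairs with both endpoints in $B$ are then counted in both recursive calls and subtracted once (there are only $O(k^2)$ of them, and their distances are read off directly). The real work is the \emph{crossing} pairs $u \in A \setminus B$, $v \in C \setminus B$, for which $B$ separates $u$ from $v$ and hence $d_{G'}(u,v) = \min_{b \in B}\bigl(d_{G'}(b,u) + d_{G'}(b,v)\bigr)$.

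Third — and this is the crux — counting the crossing pairs by distance. I would run a truncated single-source shortest-path computation (Dijkstra, since $G'$ is weighted) from each $b \in B$ in $G'$, keeping distances $\le p$; this also yields the shortcut weights the recursion needs. For a single portal $b$, the number of crossing pairs with $d_{G'}(b,u) + d_{G'}(b,v)$ equal to each $t \le p$ is read off from the product of the two length-$p$ polynomials counting vertices of $A \setminus B$ and of $C \setminus B$ at each distance from $b$ — one polynomial multiplication. The difficulty is the minimum over the $O(k)$ portals: expanding $\#\{(u,v) : \min_b(\cdots) \le s\}$ by inclusion–exclusion over subsets $S \subseteq B$ produces terms $\#\{(u,v) : \forall b \in S,\ d_{G'}(b,u) + d_{G'}(b,v) \le s\}$, i.e.\ multidimensional dominance counts over the $O(k)$-dimensional truncated distance profiles of the two sides, which must be evaluated for all thresholds $s \le p$ at once. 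Doing this with a grid/range-tree structure whose fan-out is tuned to balance its depth against the distance range is where the $2^{O(k)}$, the $n^{\varepsilon}$, and — by choosing the block size near $\sqrt{p}$ — the $\sqrt{p}$ factors all come from.

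Finally, the accounting: $a_t$ is the sum over the top-level call of the recursive counts plus the crossing counts minus the $B$-pair corrections; the recursion has depth $O(\log n)$, the subproblem sizes at each level sum to $O(n)$ up to the $O(k^2)$ extra edges per subproblem, and each subproblem costs $2^{O(k)}n^{\varepsilon}\sqrt{p}$ (dominated by the crossing count) plus near-linear shortest-path time, for a total of $2^{O(k)}n^{1+\varepsilon}\sqrt{p}$, with the log-RAM absorbing the polynomial-multiplication and bit-packing overheads into $n^{\varepsilon}$. I expect the third step to be the genuine obstacle: a naive per-portal convolution combines across portals only through an unaffordable $O(k)$-dimensional count, and pushing the dependence on the distance bound down to $\sqrt{p}$ rather than $p$ (or $p\log p$) needs exactly the right size-versus-range trade-off in the dominance-counting data structure.
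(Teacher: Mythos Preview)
Your high-level scaffolding --- approximate tree decomposition, balanced separator recursion, shortcut edges on the bag, truncated Dijkstra from each portal, inclusion of the $B$-pair correction --- matches the paper almost exactly. The gap is in the third step, and it is a real one. After inclusion--exclusion over $S\subseteq B$ you are left with terms of the form $N_S(s)=\#\{(u,v):\forall b\in S,\ d(b,u)+d(b,v)\le s\}$, to be evaluated for all $s\le p$. This is \emph{not} a standard dominance count: for fixed $s$ it is a dominance instance, but the red points $(s-d(b,u))_{b\in S}$ shift with $s$, so you would need $p$ separate $|S|$-dimensional instances (or, equivalently, the full $|S|$-dimensional convolution of the two distance-profile histograms, which has $p^{|S|}$ cells). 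Your ``grid/range-tree with fan-out tuned to the distance range'' does not obviously collapse this coupling of the threshold $s$ across all $|S|$ coordinates, and you have not said how it would.

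The paper sidesteps this entirely. Instead of inclusion--exclusion, it \emph{partitions} the crossing pairs by assigning each $(a,b)$ to the unique portal $s_i$ that realises the minimum, with a lexicographic tie-break. The crucial observation (inherited from Cabello--Knauer and Abboud et al.) is that ``$(a,b)$ is assigned to $s_i$'' is expressible as a strict dominance condition between the point $(d(a,s_i)-d(a,s_j))_{j\ne i}$ and the point $(d(s_j,b)-d(s_i,b)+[j>i])_{j\ne i}$, \emph{independent of the actual distance} $d(a,b)$. That distance is carried separately as a polynomial exponent $d(a,s_i)+d(s_i,b)$. This decoupling is exactly what makes the reduction go to a clean abstraction the paper calls \textsc{RedBluePolynomial}: compute $\sum_{R_p<B_q}x^{r_p+b_q}$ for points in $k$ dimensions with integer values $\le p$. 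Bentley's multidimensional divide-and-conquer then reduces dimension (this is where $2^{O(k)}n^{\varepsilon}$ comes from, via the Monier/Bringmann et al.\ analysis), and the one-dimensional base case is solved by sorting, bucketing into blocks of size $\sqrt{p\log n}$, handling within-block pairs by brute force and cross-block pairs by a single polynomial multiplication per block --- this is where the $\sqrt{p}$ appears. Your intuition that a $\sqrt{p}$ block size is the source of the $\sqrt{p}$ factor is correct, but it lives in the $1$-dimensional base case of a dominance recursion, not in a range tree over the threshold $s$.
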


\noindent
In particular, the number of relevant values of $p$ is at most the graph's diameter, so when the diameter is
$O(n^{\varepsilon'})$ for every $\varepsilon' > 0$, we obtain a $2^{O(k)} n^{1+\varepsilon}$ time algorithm to compute the distance distribution.
This matches the known hardness bounds above, in the sense that under the OVC, (or the stronger SETH), the dependence
on $k$ must be $2^{\Omega(k)}$ when the dependence on $n$ is subquadratic.

Cabello and Knauer \cite{Cabello2009} reduced the problem of computing the
Wiener index \cite{Wiener1947} (the sum of distances between every two vertices) to orthogonal
range queries in $k-1$ dimensions. They did so by applying a divide-and-conquer
strategy that divides the graph with small separators that are found
efficiently.  Abboud et al.  \cite{Abboud2016} adapted this approach to find
radius and diameter.  We take a similar approach, but reduce computing the
distance distribution to the following problem rather than to orthogonal range
queries.

If $v$ and $w$ are vectors in $\mathbb{R}^d$, write $v < w$ if each coordinate of
$w$ is strictly greater than the corresponding coordinate in $v$. In this case,
we say that $w$ (strictly) \emph{dominates} $v$.  We define the
\textsc{RedBluePolynomial} problem as follows.

\pbDef{\textsc{RedBluePolynomial}}{$r$ red points $R_1, \dots R_r$, and
  $b$ blue points $B_1, \dots, B_b$ in $\mathbb{R}^d$, along with corresponding
  non-negative integer values $r_1, \dots, r_r$, and $b_1, \dots, b_b$,
  respectively. }{Determine the non-zero coefficients of the polynomial
  $\sum_{(p, q): R_p < B_q} x^{r_p + b_q}$, as a list of (exponent,
  coefficient) pairs.}


\noindent
This problem can be solved naively in quadratic time, but we seek a more efficient 
solution in the case when the value of each point is bounded.

To our knowledge, this problem is new, and a variant of a well-known counting
problem, which asks for the number of red points dominated by each blue point.
Chan and P\v{a}tra\c{s}cu \cite{Chan2010} showed that this variant can
be solved in $O(n \sqrt{\log n})$ time on a Word RAM, using word operations to
facilitate efficient counting. Bentley \cite{Bentley1980} gave a
multidimensional divide-and-conquer approach for a similar problem, which
Monier \cite{Monier1980} showed had complexity $O(dn \cdot B(n, d))$ where
$B(n, d) = \binom{d + \ceil{\log n}}{d}$.

Bringmann et al. \cite{Bringmann2018} used this fact to show that the method
employed by Cabello and Knauer \cite{Cabello2009}, and Abboud et al.
\cite{Abboud2016} can, in fact, be used to compute the Wiener index, radius,
and diameter of graphs with treewidth $k$ in $2^{O(k)} n^{1 + \varepsilon}$
time for any $\varepsilon > 0$, by proving that $B(n, k) = 2^{O(k)}
n^{\varepsilon}$. Furthermore, Husfeldt \cite{Husfeldt2017} gave an improved
$2^{O(k)} n$ time algorithm for computing diameter and radius in the case where
the graph also has constant diameter.  However, it was noted that this result
only pertains to the existence of pairs of vertices at certain distances, and
not to counting the number of such pairs.  Thus, the result does not directly
give further insight to computing distance distributions.

We follow Bentley's method, where it suffices to consider the one-dimensional
case, $d = 1$. We resolve this case using square-root decomposition and fast
polynomial multiplication. Applying the approach of Bringmann et al. to analyse
the running time of this approach gives Theorem~\ref{thm:dist-pref}. A detailed
discussion of this algorithm is given in Section~\ref{sec:compute-igl}.

\section{Preliminaries}


Let $G = (V, E)$ be a graph and suppose $u, v, w \in V$. We define the
\emph{distance} $d_G(u, v)$ between $u$ and $v$ to be the fewest number of
edges in any path from $u$ to $v$, or $\infty$ if no such path exists, with the convention that $\frac{1}{\infty} = 0$.



In Section~\ref{sec:minigl-trees}, we consider the problem when the provided
graph is a tree $T$. In this case, precisely one simple path exists between
every pair $\{u, v\} \subseteq V$. Define $\mathcal{P}_T(u, v)$ to be the set of
vertices along the simple path from $u$ to $v$ in $T$, including the endpoints
$u$ and $v$.

Observe that $d_T(u, w) + d_T(w, v) =
d_T(u, v)$ if and only if $w \in \mathcal{P}_T(u, v)$.
For a vertex $w$, we also define $\mathcal{P}^{-1}_T(w)$ to be the set of all
(unordered) pairs of vertices whose path in $T$ passes through $w$.
Formally, $\mathcal{P}^{-1}_T(w) = \{ \{u, v\} \subseteq V : w \in \mathcal{P}_T(u, v) \}$.

A vertex $u$ is a \emph{centroid} of $T$ if the maximum
size of a connected component in $T - u$ is minimized.
We will use the following results, concerning centroids.

\begin{lemma}[Jordan \cite{Jordan1869}] 
  \label{lem:centroid-half}
  Every tree has either one centroid or two adjacent centroids. If a centroid is
  deleted from a tree, each tree in the remaining forest contains no more than
  $\frac{n}{2}$ vertices, where $n$ is the number of vertices in the original
  tree.
\end{lemma}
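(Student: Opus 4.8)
The result is classical (it is Jordan's theorem), but here is how one would prove it from scratch. Write $w(v)$ for the size of the largest connected component of $T - v$, so that a centroid is exactly a vertex minimising $w$. The first step is a \emph{descent} argument: if $w(v) > n/2$, then $T - v$ has a \emph{unique} largest component $C$ (two components of size $> n/2$ would already account for more than $n$ vertices), and if $u$ is the neighbour of $v$ lying in $C$, then the components of $T - u$ are exactly the components of $C - u$, each of size at most $|C| - 1$, together with one further component $V \setminus C$ of size $n - |C|$. Since $|C| > n/2$, both kinds of component have size strictly less than $|C| = w(v)$, so $w(u) < w(v)$. As $w$ takes positive integer values, iterating this from any starting vertex must terminate, and it can only terminate at a vertex $v_0$ with $w(v_0) \le n/2$; in particular $\min_v w(v) \le n/2$.

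The second step is the converse: every vertex $v$ with $w(v) \le n/2$ minimises $w$, hence is a centroid. Fix such a $v$ and an arbitrary $z \neq v$, let $u$ be the neighbour of $v$ on the path from $v$ to $z$ in $T$, and let $D$ be the component of $T - v$ containing $u$, so that $|D| \le w(v)$. Let $B = \{x \notin \{v,z\} : z \in \mathcal{P}_T(v,x)\}$ be the set of vertices lying on the far side of $z$ from $v$. Every $x \in B$ is joined to $z$ by a path avoiding $v$, so $x \in D$; moreover $z \in D \setminus B$, whence $|B| \le |D| - 1$. Since $V$ is partitioned into $B$, $\{z\}$, and the component $C$ of $T - z$ containing $v$, we get $|C| = n - 1 - |B| \ge n - |D| \ge n - w(v) \ge w(v)$, the last inequality using $w(v) \le n/2$. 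Therefore $w(z) \ge |C| \ge w(v)$ for every $z$, as claimed. Combining the two steps, the set of centroids equals $\{v : w(v) \le n/2\}$, which is non-empty; in particular, deleting any centroid leaves a forest each of whose trees has at most $n/2$ vertices.

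It remains to bound the number of centroids. Suppose $v$ and $z$ are distinct centroids that are \emph{not} adjacent. Then the vertex $u$ above satisfies $u \neq z$ and $u \in D \setminus B$, so in fact $|B| \le |D| - 2 \le w(v) - 2$, giving $w(z) \ge |C| = n - 1 - |B| \ge n + 1 - w(v)$. Since $v$ and $z$ are both centroids, $w(z) = w(v)$, so $2 w(v) \ge n + 1$, contradicting $w(v) \le n/2$. Hence any two distinct centroids are adjacent, and since $T$ contains no triangle there are at most two of them. I expect the one delicate point to be the middle step: correctly identifying the ``far side'' set $B$ as a subset of a single component of $T - v$ and then chaining the size estimates for $B$, $D$, and $C$; the descent and the no-triangle observation are routine.
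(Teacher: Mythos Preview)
Your proof is correct and is essentially the standard argument for Jordan's theorem. However, the paper does not actually prove this lemma: it is stated with a citation to Jordan \cite{Jordan1869} and used as a black box, so there is no ``paper's own proof'' to compare against. Your write-up stands on its own as a clean self-contained proof of the cited result.
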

\begin{restatable}{lemma}{lemcentroiddisconnect}
  \label{centroid-disconnect}
  Let $u$ be a centroid of a tree $T$ with $n \geq 2$ vertices.
  Then, $\abs{\mathcal{P}^{-1}_T(u)} \geq \frac{n^2}{4}$.
\end{restatable}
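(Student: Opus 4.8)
The plan is to count $\abs{\mathcal{P}^{-1}_T(u)}$ directly in terms of the sizes of the connected components of $T - u$, and then control the resulting sum of squares using the defining property of a centroid, which is exactly what \autoref{lem:centroid-half} supplies.

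First I would split the pairs counted by $\mathcal{P}^{-1}_T(u)$ into two disjoint types. Write $C_1, \dots, C_t$ for the components of $T - u$, with $n_i = \abs{C_i}$ and $\sum_{i=1}^t n_i = n - 1$. A pair $\{v, w\} \subseteq V$ lies in $\mathcal{P}^{-1}_T(u)$ if and only if either one of $v, w$ equals $u$, or both are distinct from $u$ and lie in distinct components $C_i$, $C_j$. The first case contributes exactly $n-1$ pairs, since $u \in \mathcal{P}_T(u, v)$ for every $v$; the second is justified by the preliminary observation that $u \in \mathcal{P}_T(v,w)$ exactly when the path from $v$ to $w$ uses $u$, i.e.\ when deleting $u$ separates $v$ from $w$. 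Hence
\[
  \abs{\mathcal{P}^{-1}_T(u)} \;=\; (n-1) + \sum_{1 \le i < j \le t} n_i n_j \;=\; (n-1) + \tfrac{1}{2}\!\left( (n-1)^2 - \sum_{i=1}^t n_i^2 \right).
\]

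Next I would bound $\sum_i n_i^2$ from above. Since $u$ is a centroid, \autoref{lem:centroid-half} gives $n_i \le n/2$ for every $i$, so $\sum_i n_i^2 \le (\max_i n_i)\sum_i n_i \le \tfrac{n}{2}(n-1)$. Substituting and simplifying,
\[
  \abs{\mathcal{P}^{-1}_T(u)} \;\ge\; (n-1) + \tfrac{1}{2}\!\left( (n-1)^2 - \tfrac{n(n-1)}{2} \right) \;=\; \frac{(n-1)(n+2)}{4},
\]
and finally $\frac{(n-1)(n+2)}{4} - \frac{n^2}{4} = \frac{n-2}{4} \ge 0$ whenever $n \ge 2$, which is the desired inequality.

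There is no real obstacle here; the two things that need care are (i) not to forget the $n-1$ pairs involving $u$ itself, which are genuinely needed — the bound is tight at $n = 2$ — and (ii) to invoke the centroid hypothesis at precisely the point where it is used, namely to obtain the uniform bound $n_i \le n/2$ that caps $\sum_i n_i^2$ by $\tfrac{n}{2}(n-1)$.
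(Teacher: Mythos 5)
Your proof is correct, and it takes a genuinely different — and cleaner — route than the paper's. The paper argues by an exchange/extremal argument: it supposes a minimizing configuration of subtree sizes has three or more subtrees, moves one vertex from the smallest subtree to the largest permissible one, checks that the centroid property is preserved, shows the number of paths through $u$ strictly decreases, and so reduces to the case of two nearly-equal subtrees, which it then evaluates explicitly as $(a_1+1)(a_2+1)-1 \geq \frac{n^2}{4}$. Your proof instead expands $\abs{\mathcal{P}^{-1}_T(u)} = (n-1) + \sum_{i<j} n_i n_j = (n-1) + \frac{1}{2}\bigl((n-1)^2 - \sum_i n_i^2\bigr)$ and bounds $\sum_i n_i^2 \leq (\max_i n_i)(n-1) \leq \frac{n}{2}(n-1)$ directly from the centroid property, yielding $\frac{(n-1)(n+2)}{4} \geq \frac{n^2}{4}$ for $n \geq 2$. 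What you buy is a one-shot algebraic argument with no case analysis and no need to verify that an exchange preserves the centroid; the paper's argument is more work for the same final bound. Your remark that the $n-1$ pairs involving $u$ are essential for tightness at $n=2$ is accurate and matches the paper's own treatment of a trivial base case for $n=2$.
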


\begin{proof}
  The statement is trivially true for the only tree with $n = 2$ vertices.
  Otherwise, root the tree at $u$, and let the children of $u$ be $v_1, \dots,
  v_{c(u)}$.  Note that $c(u) \geq 2$ when $n \geq 3$: if $c(u) = 1$ then $T -
  u$ has a single component with $n - 1 > \frac{n}{2}$ vertices, so $u$ would
  not be a centroid of $T$, according to Lemma~\ref{lem:centroid-half}.

  Suppose the subtree rooted at $v_i$ has $a_i$ vertices, so $a_1 + \dots +
  a_{c(u)} = n - 1$. Now $\abs{\mathcal{P}^{-1}_T(u)}$ is the number of paths in $T$
  passing through $u$. We can view each path through $u$ as starting in some
  subtree $v_i$ and ending at either $u$ or some other subtree $v_j$ where $j
  \neq i$. Hence, the number of paths through $u$ can be determined only from
  these subtree sizes, so we need only consider the distributions of the number
  of vertices among the subtrees.
  
  Without loss of generality, further assume that $a_1 \geq a_2 \geq \dots \geq
  a_{c(u)} \geq 1$. We will show that $\abs{\mathcal{P}^{-1}_T(u)}$ is minimized
  when $c(u) = 2$.
 
  Suppose, for a contradiction, that $c'(u) \geq 3$ and $a_1' \geq a_2' \geq
  \dots \geq a'_{c'(u)} \geq 1$ is some sequence of subtree sizes that induces
  strictly fewer paths through $u$. We also have that $a_{c'(u)} \geq 1$.
  Further, we must have that either $a_1' < \ceil{\frac{n-1}{2}}$, or $a_1' = \ceil{\frac{n-1}{2}}$ and $a_2' <
  \floor{\frac{n-1}{2}}$.
  
  Now suppose we move one vertex from the smallest subtree to the largest (or the second largest, if $a_1' =
  \ceil{\frac{n-1}{2}}$) subtree. In both cases, the number of vertices in every subtree remains at most
  $\ceil{\frac{n-1}{2}}$. As a result, $u$ must remain a centroid: the deletion of a vertex in one such subtree will leave a component containing all vertices in
  the other subtrees plus the root, which together must have size at least $\ceil{\frac{n-1}{2}}$.
  Next, consider how $\abs{\mathcal{P}^{-1}_T(u)}$ has changed: we have invalidated at least
  $a_2'$ paths and created only $a'_{c'(u)}-1$ additional paths, so we have obtained a
  scenario with strictly fewer paths through a centroid, which is a contradiction.

  Hence, $\abs{\mathcal{P}^{-1}_T(u)}$ is minimized when $c(u) = 2$, so we must have
  $a_1 = \ceil{\frac{n-1}{2}}$ and $a_2 = \floor{\frac{n-1}{2}}$, because $a_1
  \leq \frac{n}{2}$ by Lemma~\ref{lem:centroid-half}. Therefore, there are
  $(a_1 + 1)(a_2 + 1) - 1$ paths through $u$, accounting also for those paths with
  $u$ as an endpoint. We have
  \begin{align*}
    (a_1 + 1)(a_2 + 1) - 1 &= \ceil*{\frac{n+1}{2}} \floor*{\frac{n+1}{2}} - 1 \\
                           &\geq \frac{n}{2}\left(\frac{n}{2} + 1\right) - 1 \\
                           &\geq \frac{n^2}{4}
  \end{align*}
  whenever $n \geq 2$, as required.
\end{proof}



In Section~\ref{sec:compute-igl} we also consider the problem of computing the IGL, using the tree decompositions of
graphs with small treewidth.
A \emph{tree decomposition} of $G$ is a tree $\mathcal{T}$ whose vertices (called \emph{nodes}) are $\{1, \dots, I\}$ and a sequence $\mathcal{V}_1, \dots, \mathcal{V}_I$ of subsets of $V$ (called \emph{bags}) such that

\begin{enumerate}
  \item $V = \bigcup_{i=1}^I \mathcal{V}_i$;
  \item If $uv \in E$, then $\{u, v\} \subseteq \mathcal{V}_i$ for some $i$;
  \item $\mathcal{V}_a \cap \mathcal{V}_c \subseteq \mathcal{V}_b$ whenever $b \in
    \mathcal{P}_\mathcal{T}(a, c)$.
\end{enumerate}

\noindent
The \emph{width} of such a tree decomposition is $\max_{i=1}^I \abs{\mathcal{V}_i} -
1$.  The \emph{treewidth} $\text{tw}(G)$ of $G$ is the minimum width among all tree
decompositions of $G$.

%

%
%

\subsection{Model of computation}

We establish our results on models of computation that closely reflect what is
available to programmers of high-level languages on physical computing devices
today.

In Section~\ref{sec:minigl-trees}, we solve \textsc{MinIGL} by explicitly
computing the minimum IGL that can be obtained by deleting $k$ vertices from
the given tree. We perform this on the \emph{real RAM} formulated by Shamos
\cite{Shamos1978}, which allows addition, subtraction, multiplication,
division and comparisons of real numbers in constant time, but does not support
rounding a value to the nearest integer, or modulo as native operations. 
This permits efficiently adding and comparing contributions of distances
between vertices to the IGL.

In Section~\ref{sec:compute-igl}, we reduce the problem of computing the IGL of
a graph to finding its distance distribution. We solve this on a \emph{log-RAM}
introduced by F\"{u}rer \cite{Furer2014}, which is a Word RAM that also
supports constant time arithmetic operations (including multiplication, integer
modulo, and division) on words of length $O(\log n)$. F\"{u}rer showed that on
a log-RAM, multiplication of two $n$-bit integers can be done in $O(n)$ time,
using either the approach of Sch\"{o}nhage and Strassen \cite{Schonhage1971}
(performing a complex polynomial-based Fast Fourier Transform (FFT) and
maintaining sufficient precision), or that of F\"{u}rer \cite{Furer2009}
(performing an FFT over a ring of polynomials).

We extend this to integer polynomials with bounded coefficients, as follows.

\begin{restatable}{lemma}{lemfastmult}
  \label{lem:fast-mult}
  Suppose $P$ and $Q$ are integer polynomials of degree $n$ whose coefficients
  are non-negative integers, such that their product $PQ$ has coefficients not
  exceeding some integer $m$. Then, the coefficients of $PQ$ can be
  computed from the coefficients of $P$ and $Q$ in $O(n \log m)$ time on a
  log-RAM.
\end{restatable}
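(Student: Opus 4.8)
The plan is to reduce multiplication of two polynomials with bounded-size coefficients to a single multiplication of two large integers, and then invoke F\"urer's result that $n$-bit integers can be multiplied in $O(n)$ time on a log-RAM. The standard tool here is Kronecker substitution: evaluate $P$ and $Q$ at $x = 2^b$ for a suitably chosen block size $b$, so that each coefficient of $P$, $Q$, and crucially of the product $PQ$, fits in a single block of $b$ bits without carrying into its neighbour. Since the coefficients of $PQ$ are non-negative and bounded by $m$, and those of $P$ and $Q$ are non-negative, it suffices to take $b = \ceil{\log(m+1)}$ (or simply $b = \Theta(\log m)$): then the integer $\widehat{PQ} := P(2^b)\cdot Q(2^b)$ has, in its base-$2^b$ representation, exactly the coefficients of $PQ$ as its digits, and these can be read off directly by extracting $b$-bit windows.

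The steps, in order, are as follows. First I would set $b = \ceil{\log(m+1)}$ and form the integers $\widehat P = P(2^b)$ and $\widehat Q = Q(2^b)$; each is obtained by concatenating the $b$-bit representations of the $n+1$ coefficients, so each has $O(nb) = O(n\log m)$ bits, and building them from the coefficient lists takes $O(n\log m)$ time on a log-RAM (writing $O(\log n)$-size chunks at a time; note $\log n = O(\log m)$ may be assumed, or handled by padding $b$ up to $\max(\ceil{\log(m+1)}, \ceil{\log n})$, which only changes constants). Second, I would multiply $\widehat P$ and $\widehat Q$ using F\"urer's log-RAM integer multiplication, which runs in time linear in the bit-length of the operands, i.e.\ $O(n\log m)$. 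Third, I would argue correctness: expanding $\widehat P\,\widehat Q = \sum_i c_i\, 2^{bi}$ where $c_i$ is the $i$-th coefficient of $PQ$, and since $0 \le c_i \le m < 2^b$, there are no carries between blocks, so the base-$2^b$ digits of $\widehat{PQ}$ are precisely the $c_i$. Fourth, I would extract these digits: reading off each $b$-bit window of the $O(n\log m)$-bit product and emitting it as coefficient $c_i$, again $O(n\log m)$ time on a log-RAM since each window is handled in $O(b/\log n)$ word operations. Summing the four phases gives the claimed $O(n\log m)$ bound.

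The only real subtlety — and the thing to be careful about rather than a genuine obstacle — is ensuring that the bound $m$ on the product coefficients genuinely prevents carries, which is exactly why the hypothesis is stated in terms of the coefficients of $PQ$ rather than of $P$ and $Q$: a priori a coefficient of $PQ$ could be as large as $(n+1)$ times the product of coefficient bounds of $P$ and $Q$, so using the given $m$ directly is both necessary and sufficient, and avoids a spurious extra $\log n$ factor. A second minor point is the interaction between the "word length $O(\log n)$" of the log-RAM model and the block size $b$: if $m$ is so small that $b = o(\log n)$, one simply pads $b$ up to $\Theta(\log n)$ so that each block still lies within a constant number of machine words and digit extraction is clean; this does not affect the asymptotics since the total bit-length is then $O(n\log n) = O(n\log m \cdot (\log n/\log m))$, and one can assume $m \ge n$ without loss of generality (or absorb the factor, as all the applications in this paper have $m$ polynomially bounded in $n$, making $\log m = \Theta(\log n)$). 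I expect the write-up of this lemma to be short, with the bulk of the text devoted to stating the Kronecker substitution cleanly and citing F\"urer's $O(n)$-time integer multiplication on the log-RAM.
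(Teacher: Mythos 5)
Your proposal is correct and matches the paper's proof essentially verbatim: both use Kronecker substitution with block size $b=\ceil{\log(m+1)}$ (the paper phrases this as ``the smallest power of two strictly greater than $m$''), invoke F\"urer's linear-time integer multiplication on the log-RAM, and observe that the bound $m$ on the product's coefficients guarantees no carries between blocks so the digits can be read off directly. The paper's write-up is terser and omits the subtleties you flag about padding $b$ up to $\Theta(\log n)$, but the approach is identical.
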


\begin{proof}
  This can be done using Kronecker substitution
  \cite{kronecker1882grundzuge}. Let $m'$ be the smallest power of two strictly
  greater than $m$. Evaluate $P(m')$ and $Q(m')$ to give two integers, each
  containing $O(n \log m)$ bits. This is easily done through bit manipulation
  because $m'$ is a power of two. Next, multiply these together to obtain the
  value of $(PQ)(m')$: this takes time linear in the number of bits in the
  multiplicands. Finally, one can unpack the coefficients of $PQ$ from this
  value in $O(n \log m)$ time by viewing it as a base $m'$ integer: since $m'$
  is a power of two, this can be simply read off the bits of the result.
\end{proof}

%
%
%

\section{MinIGL on Trees}

\label{sec:minigl-trees}

In this section,  we give a new subexponential time, polynomial space algorithm for \textsc{MinH} on trees, when $H$
satisfies the following properties.
We use this to prove Theorems~\ref{thm:fptstatement} and \ref{thm:subexp-statement}, by showing that IGL also satisfies
these properties.

\begin{definition}[Additivity]
    We say that a measure $H$ on graphs is \emph{additive} if $H(G_1 \oplus G_2) = H(G_1) + H(G_2)$
    for any vertex-disjoint graphs $G_1 = (V_1, E_1)$ and $G_2 = (V_2, E_2)$, 
    where $G_1 \oplus G_2$ is the graph $(V_1\ \dot\cup\ V_2, E_1\ \dot\cup\ E_2)$.
\end{definition}

Call a forest \emph{$L$-trimmed} if none of its trees contain more than $L$
vertices.  In the same way, call a subset of vertices in a tree
\emph{$L$-trimming} if their deletion gives an $L$-trimmed forest.

\begin{definition}[Balanced on trees]
    \label{def:balance}
    We say that a measure $H$ is \emph{balanced (on trees)} if
    there exist positive constants $c_H$ and $t_H$, such that, for
    any \textsc{Yes}-instance $(T, k, \tau)$ of \textsc{MinH} on a tree $T$
    with $n$ vertices, there exists a witness that is $c_H
    \left(n/k\right)^{t_H}$-trimming. 
\end{definition}

Hereafter, we will assume that the value of $H$ on a forest is computable in $O(n^{\alpha})$ time, and $O(n^{\beta})$
space, on a real RAM, where $\alpha, \beta \geq 1$ are constants.
We also assume that such a value can be stored in a constant number of words on a real RAM.

We prove Theorems~\ref{thm:fptstatement} and \ref{thm:subexp-statement} by giving compatible algorithms for
\textsc{MinH} on trees, then complete the proof by showing that IGL satisfies the same properties that $H$ does.
Now it is easy to see that there is a na\"{i}ve, brute-force algorithm for \textsc{MinH}.

\begin{restatable}{lemma}{lemnaive}
  \label{lem:bf-h-tree}
  There is an $O(n^{k+\alpha})$ time, $O(n^{\beta})$ space algorithm for
  \textsc{MinH} on a tree, on a real RAM.
\end{restatable}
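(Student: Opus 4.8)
The plan is to enumerate all candidate deletion sets $S$ of size exactly $k$ by brute force, compute $H(T - S)$ for each, and compare against the target $\tau$. There are $\binom{n}{k} = O(n^k)$ such sets, and for each one we remove the vertices (and incident edges) in $O(n)$ time and then evaluate $H$ on the resulting forest. Since $H$ is additive, we may evaluate it component by component, but in any case it is computable in $O(n^\alpha)$ time by assumption; forming the forest itself is dominated by this since $\alpha \geq 1$. Thus each candidate takes $O(n^\alpha)$ time, giving $O(n^k \cdot n^\alpha) = O(n^{k + \alpha})$ time overall.

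For the space bound, I would observe that we do not need to store all subsets simultaneously: we iterate over them one at a time (e.g.\ by a standard recursive enumeration of $k$-subsets, which uses $O(k) \subseteq O(n)$ additional stack space), maintaining only the best value of $H(T-S)$ seen so far, which fits in a constant number of words on a real RAM by assumption. Each evaluation of $H$ uses $O(n^\beta)$ space, which is reclaimed before the next iteration. Hence the total space is $O(n^\beta + n) = O(n^\beta)$, using $\beta \geq 1$.

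One subtlety worth addressing explicitly is the distinction between deleting \emph{at most} $k$ vertices and \emph{exactly} $k$ vertices: since deleting more vertices from a forest never increases an additive, appropriately monotone measure — or, more simply, since we may always pad a smaller witness up to size $k$ by deleting arbitrary extra vertices without changing whether the instance is a \textsc{Yes}-instance in the relevant sense — it suffices to consider sets of size exactly $k$; alternatively one enumerates all $O(n^k)$ subsets of size at most $k$, which changes nothing asymptotically. There is essentially no hard part here; the only thing to be careful about is stating the model assumptions (that $H$ is $O(n^\alpha)$-time, $O(n^\beta)$-space, and stored in $O(1)$ words on a real RAM) so that the arithmetic comparison with $\tau$ and the bookkeeping of the running optimum are genuinely constant-time operations.
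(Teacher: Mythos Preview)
Your proof is correct and follows exactly the same brute-force approach as the paper: enumerate all $\binom{n}{k} = O(n^k)$ subsets, evaluate $H$ on each resulting forest in $O(n^\alpha)$ time and $O(n^\beta)$ space, and take the minimum. The paper's own proof is even terser than yours and omits the extra remarks about enumeration, space reuse, and the ``exactly $k$'' subtlety.
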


\begin{proof}
  We simply try all $\binom{n}{k} = O(n^k)$ subsets of $k$ vertices. The value
  of $H$ on the forest that remains after each subset has been removed can be
  computed in $O(n^{\alpha})$ time and $O(n^{\beta})$ space.
\end{proof}

If $k$ is small, this algorithm may be efficient. When $k$ is large, the
vertices forming an optimal solution will leave a forest of relatively small
trees after they are deleted, since $H$ is balanced. We use this property to
develop an alternate, more efficient algorithm for \textsc{MinH} in this case.
Let $L = c_H \left(n/k\right)^{t_H}$. 
Our algorithm minimizes $H$, considering only $L$-trimming subsets of $k$ vertices.
The running time of this algorithm is exponential in $L$, but polynomial in $n$, so it is fast when $k$ is large,
relative to $n$. 

\begin{lemma}
  \label{lem:expl-h-tree}

  Let $T = (V, E)$ be a tree with $n$ vertices. There is an $O \left(
  \frac{4^L}{\sqrt{L}} \left(n^2 + L^{\alpha - 1}\right) \right)$ time,
  $O\left(nkL + L^{\beta}\right)$ space algorithm on a real RAM, which finds
  the minimum value of $H(T - S)$, among all $L$-trimming subsets $S$ of $k$
  vertices.
\end{lemma}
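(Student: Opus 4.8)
The plan is to design a dynamic programming algorithm over a rooted version of $T$ that builds up the optimal $L$-trimming deletion set by ``matching'' ordered subtrees to the tree structure, exactly as foreshadowed in the introduction. First I would root $T$ at an arbitrary vertex and fix the left-to-right order of children, so that the vertices have a well-defined preorder and each vertex $v$ has a well-defined ``subtree to the right at a given depth'' structure; this turns the unordered tree into an ordered tree and lets the DP process vertices one at a time. The key idea is that once we decide which vertices are deleted, the surviving forest decomposes into connected pieces each of size at most $L$, and since $H$ is additive, $H(T-S)=\sum_{\text{component } C} H(C)$. So the DP needs to (i) enumerate, implicitly, the ways the surviving part can be carved into $\le L$-vertex connected blobs, (ii) charge each completed blob its $H$-value, and (iii) track how many deletions have been used so far, capping at $k$.

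Concretely, I would process the tree in a bottom-up (postorder) sweep, maintaining a DP table indexed by a node $v$, the number of deleted vertices so far, and a description of the ``partial block'' that currently contains $v$ and still dangles upward unfinished — this description is the ordered subtree shape of the already-committed part of $v$'s block, which has at most $L$ vertices. The number of such ordered subtree shapes on at most $L$ vertices is the Catalan-type count $O(4^L/\sqrt{L}\cdot \frac{1}{\sqrt{L}})$... more precisely the number of ordered rooted forests / trees on $\le L$ nodes is $O(4^L/L^{3/2})$, and this is where the $\frac{4^L}{\sqrt L}$ factor in the time bound comes from after multiplying by the $O(n)$ choices of node and the $O(k)$ choices of deletion count and being slightly generous. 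When we ``close off'' a block (because its boundary vertices are all adjacent to deleted vertices or to the root of a new block), we invoke the assumed $O(L^\alpha)$-time subroutine to compute $H$ of that block and add it in; amortized over the whole run this contributes the $L^{\alpha-1}$ term. The transitions at each node either (a) delete $v$, incrementing the counter and starting fresh blocks for each child, or (b) keep $v$, in which case $v$'s block shape is assembled from the children's dangling block shapes (those children we attach to $v$'s block) together with freshly started blocks for the remaining children — and the ordered structure is exactly what makes this assembly well-defined and countable.

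For the space bound, note we do not need the whole DP table at once: processing in postorder, at any moment the ``frontier'' consists of $O(\text{height})\le O(n)$ active nodes, each carrying, for each of the $O(k)$ deletion counts, a best value and a pointer to the current block shape of size $O(L)$; storing the shapes explicitly costs $O(nkL)$, plus $O(L^\beta)$ scratch space for the single call to the $H$-evaluation subroutine, giving $O(nkL + L^\beta)$. The main obstacle, and the part that needs the most care, is the precise definition of the ``dangling block shape'' state so that (1) it genuinely captures all information needed to compute $H$ of the block once it is closed (here additivity is essential, since it means only the isolated block matters, not its embedding in $T$), (2) distinct global deletion sets that lead to the same future behavior are correctly merged, and (3) the count of distinct shapes is provably $O(4^L/L^{3/2})$ so the claimed running time holds; I would handle this by identifying the state of node $v$ with an ordered rooted tree on $\le L$ vertices (the committed portion of $v$'s block, drawn with $v$ as root and children ordered consistently with $T$'s fixed ordering) and citing the standard generating-function bound on the number of such trees. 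The $n^2$ term in the time bound comes from the straightforward accounting of the transitions: at each of the $n$ nodes, for each child we combine states, and a crude bound on the bookkeeping of merging child shapes into the parent shape across all nodes and deletion counts is $O(n^2)$ (e.g., $O(n)$ work per node beyond the shape enumeration, or absorbing the $O(nk)$-sized frontier updates), which I would tighten only as far as needed to match the statement.
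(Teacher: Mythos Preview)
Your high-level idea—enumerate the possible ordered-tree shapes of the component containing a surviving vertex and match them against $T$—is the paper's idea. But your organization differs from the paper's in a way that is not cosmetic: it breaks the space bound.

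The paper does \emph{not} carry the partial block shape as part of the DP state. It runs an outer loop over every ordered tree $T'$ on at most $L$ vertices (there are $O(4^L/L^{3/2})$ of these), and for each fixed $T'$ runs a matching DP whose state is $(u,i,b',u',j)$: a vertex/child-index pair in $T$, a budget, and a vertex/child-index pair in $T'$. For a fixed $T'$ this table has only $O(nkL)$ entries, and since only one $T'$'s table is live at a time, the total space is $O(nkL + L^{\beta})$.

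In your version the DP table is indexed by the dangling block shape, so a single frontier vertex already carries $\Theta\bigl(k \cdot 4^L/L^{3/2}\bigr)$ entries. Your space argument tries to collapse this by storing, for each $(v,b)$, only ``a best value and a pointer to the current block shape''; that is exactly the error. You cannot keep a single optimal (value, shape) pair per $(v,b)$: two dangling shapes with the same accumulated $H$-value so far behave differently once the parent is kept and the block continues to grow, because the eventual $H$ of the completed block depends on the entire shape, not on any scalar summary of it. So either you store all shapes (space exponential in $L$) or you lose correctness. The \emph{time} bound is recoverable in your formulation with a more careful count (the pairs $(\sigma,\sigma')$ of parent-block-so-far and attached child-block shapes are in bijection with edges of ordered trees on at most $L$ vertices, contributing the $O(4^L/\sqrt{L})$ factor, and the budget split contributes the $O(n^2)$ via the standard tree-DP convolution trick you allude to), but the $O(nkL+L^{\beta})$ space bound genuinely requires the shape enumeration to sit \emph{outside} the DP, as in the paper, not inside the state.
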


\begin{proof}
  We root $T$ arbitrarily and employ DP to compute
  this minimum value for every subtree and budget, in two cases: the case where the root
  of the subtree is deleted, and the case where it is not. Denote these minimum
  values by $f(u, b)$ and $g(u, b)$, respectively, for the subtree rooted at
  $u$ and budget $b$. The leaves of the tree form the base cases for this
  algorithm, and the final answer is derived from the minimum of $f(\text{root}, k)$
  and $g(\text{root}, k)$. It remains to give recurrences for $f$ and $g$.

  In the case where $u$ is deleted, we simply need to distribute the remaining
  $b-1$ deletions among the subtrees rooted at each child of $u$. Let the
  children of $u$ be $v_1, \dots, v_{ch_T(u)}$ in a fixed order. Our recurrence takes the form of another
  DP algorithm: let $f'(u, i, b')$ be the minimum value of $f$
  distributing a budget of $b'$ deletions among the subtrees rooted at the
  first $i$ children of $u$. Our recurrence is as follows:
  \[
    f'(u, i, b') = \min_{0 \leq b'' \leq b'} ( \min(f(v_i, b''), g(v_i, b''))
    + f'(u, i-1, b' - b'') )
  \]
  \noindent
  and we have that $f(u, b) = f'(u, ch_T(u), b-1)$.

  If $u$ is not deleted, it will be the root of some tree with no more than $L$
  vertices after our chosen subset has been deleted. We fix the structure (formally, an
  \emph{ordered tree}) for this rooted tree, and attempt to match the vertices
  in this structure to vertices in the subtree rooted at $u$. Formally, let the
  structure be an ordered tree $T'$ over $L' \leq L$ vertices. Let its vertex
  set be $V' = \{1, \dots, L'\}$ and, without loss of generality, suppose $1$
  is its root. We seek a total, injective mapping $m: V' \rightarrow V$
  satisfying the following conditions. 
  \begin{enumerate}
    \item $m(1) = u$;
    \item Suppose $p$ and $p'$ are the parents of $q$ and $q'$ in $T$ and $T'$,
      respectively. If $m(q') = q$ then $m(p') = p$;
    \item Let $p$ and $p'$ be vertices in $T$ and $T'$ such that their children
      are, in order, $q_1, \dots, q_{ch_T(p)}$ and $q'_1, \dots,
      q'_{ch_{T'}(p')}$, respectively. If $m(q'_{j_1}) = q_{i_1}$, $m(q'_{j_2})
      = q_{i_2}$ and $j_1 \leq j_2$, then $i_1 \leq i_2$. That is, children are
      matched in order.
  \end{enumerate}


  \begin{figure}
    \tikzstyle{vertex}=[circle,draw=black,thick,minimum size=9pt,inner sep=0pt]
    \tikzstyle{selected vertex} = [vertex, fill=red!24]
    \tikzstyle{edge} = [draw,thick,-]
    \tikzstyle{weight} = [font=\small]

    \begin{center}
      \begin{tikzpicture}[node distance=1cm,scale=0.6]
        \usetikzlibrary{positioning}

        \draw node[] () at (0, 4.7) {$T'$};
        \draw node[] () at (6.25, 4.7) {$T$};

        \draw node[vertex] (1) at (0, 4) {\footnotesize 1};
        \draw node[vertex] (2) at (-1, 3) {\footnotesize 2};
        \draw node[vertex] (3) at (1, 3) {\footnotesize 3};
        \draw node[vertex] (4) at (-1, 2) {\footnotesize 4};
        \draw node[vertex] (5) at (0.3, 2) {\footnotesize 5};
        \draw node[vertex] (6) at (1.7, 2) {\footnotesize 6};

        \foreach \source/ \dest in
        {1/2, 1/3, 2/4, 3/5, 3/6}
        \path[edge] (\source) -- node[weight] {} (\dest);




        \draw node[vertex] (a) at (6.25, 4) {\footnotesize 1};
        \draw node[vertex, fill=gray!30] (b) at (4, 3) {};
        \draw node[vertex] (c) at (5.5, 3) {\footnotesize 2};
        \draw node[vertex, fill=gray!30] (d) at (7, 3) {};
        \draw node[vertex] (e) at (8.5, 3) {\footnotesize 3};

        \draw node[vertex, fill=gray!30] (f) at (5, 2) {};
        \draw node[vertex] (g) at (6, 2) {\footnotesize 4};

        \draw node[vertex] (h) at (7.8, 2) {\footnotesize 5};
        \draw node[vertex, fill=gray!30] (i) at (8.5, 2) {};
        \draw node[vertex] (j) at (9.2, 2) {\footnotesize 6};

        \draw node[vertex] (k) at (4, 2) {};

        \draw node[vertex] (l) at (4.6, 1) {};
        \draw node[vertex] (m) at (5.4, 1) {};

        \draw node[vertex] (n) at (7, 2) {};

        \foreach \source/ \dest in
        {a/b, a/c, a/d, a/e, c/f, c/g, e/h, e/i, e/j, b/k, f/l, f/m, d/n}
        \path[edge] (\source) -- node[weight] {} (\dest);

        \path[edge, dashed, ->] (1) -> node[weight, above] {$m$} (a);
      \end{tikzpicture}

      \caption[\textsc{MinH} on trees, Dynamic Programming. Mapping the vertices of ordered tree $T'$ to $T$]{Mapping the vertices of $T'$ to $T$ in
        Lemma~\ref{lem:expl-h-tree}. Note that $T'$ is an ordered tree, and
        that children (and their subtrees) must be mapped in order. Shaded
        vertices will be deleted, and we recursively solve for the subtrees
      rooted at their children.}
    \label{fig:ordered-mapping}
    \end{center}
  \end{figure}
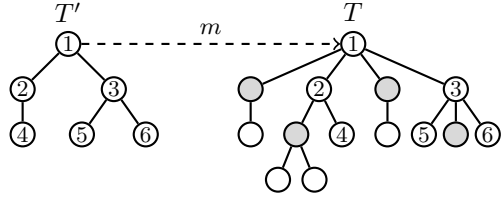

  \noindent
  Note that the structure of the chosen ordered tree uniquely characterises the
  value of $H$ on the component containing $u$, since $H$ is only defined on
  unlabelled graphs, and is additive, so this value is independent of the
  structure of other components.

  Let $v$ be some vertex in $T$. If $v$ is mapped to by $m$, then $v$ is a part
  of this component. Otherwise, if $v$ is not mapped to by $m$ but its parent
  is, then $v$ must be a vertex chosen for deletion, and so we should
  recursively consider each of its childrens' subtrees.

  This implies a DP approach to determine the optimal choice of $m$, similar to
  that of $f'$. We let $g'(u, i, b', u', j)$ be the minimum value (of $H$)
  induced by a mapping which maps $u'$ to $u$ and maps the first $j$ children
  of $u'$ among the first $i$ children of $u$ with a total budget of $b'$
  deletions in the subtree rooted at $u$. This value \emph{does not} include the contributions of vertex pairs
  in $T$ which both end up in the current component (are mapped to by $m$).
  
  We have a choice to either delete the
  $i$th child $v_i$ of $u$, or map it to the $j$th child $v'_j$ of $u'$. In
  both cases, we allocate a budget of $b'' \leq b'$ deletions to the subtree
  rooted at $v_i$.  This gives the following recurrence:
  \[
    g'(u, i, b', u', j) = \min_{0 \leq b'' \leq b'} \min \begin{cases}
       g'(u, i-1, b' - b'', u', j) + f(v_i, b''), \\
       g'(u, i-1, b' - b'', u', j-1) + g'(v_i, ch_T(v_i), b'', v'_j, ch_{T'}(v'_j)) 
    \end{cases}
  \]
  and $g(u, b) = \min_{T'} (H(T') + g'(u, ch_T(u), b, 1, ch_{T'}(1)))$.
  This concludes the description of the algorithm.

  We now prove the time and space complexity of this algorithm.
  First, consider the complexity of computing $f'$ (and thus that of computing
  $f$) over the entire tree. Each $(u, i)$ pair corresponds to an edge of the
  tree, and $b' \leq k$, so there are $O(nk)$ states of $f$ and $f'$ to
  consider. Further, if $b'$ is no less than the number of vertices in the
  subtree rooted at $u$, we may delete every vertex in that subtree, giving a
  minimum IGL of 0. Hence, we need only consider values for $b''$ which do
  not exceed the number of vertices in the subtree rooted at $v_i$, and also
  those where $b' - b''$ does not exceed the total number of vertices in the
  subtrees rooted at $v_1, \dots, v_{i-1}$. It can be shown (see, e.g.,
  \cite{Cygnan2012}) that there are only $O(n^2)$ $( (u, i, b'), b'')$ tuples
  satisfying these properties in the entire tree.
  Since each (state, recurrence) pair corresponds to one such tuple, all values of $f$ can
  be computed in $O(n^2)$ time and $O(nk)$ space, assuming the required values
  of $g$ are readily available.

  Similarly, for a fixed $T'$ we can examine the complexity of computing the
  values of $g'$. Each $(u', j)$ pair corresponds to an edge of $T'$, and there
  are at most $L - 1$ of these. By a similar argument to that for $f'$, we only
  need to consider $O(n^2)$ tuples $(u, i, b', b'')$. Hence, we can compute all
  values of $g'$ in $O(n^2 L)$ time and $O(nkL)$ space.

  It is a well-known result that the number of ordered trees with $l$ vertices
  is $C_{l-1}$, the $(l-1)$-th Catalan Number (see, for example,
  \cite{Flajolet:2009:AC:1506267}). The sum of the first $l$ Catalan numbers
  was upper bounded by $\frac{4^{l+1}}{3 \sqrt{\pi l^3}}$ by
  Topley \cite{Topley2016}, so there are $O\left(\frac{4^L}{L \sqrt{L}} \right)$
  different ordered trees we need to try as $T'$. We generate these
  recursively. For each in turn, we compute its value of $H$ in $O(L^{\alpha})$
  time and $O(L^{\beta})$ space. Then, we compute $g'$, which dominates our
  running time and memory consumption.  Thus, our algorithm runs in $O\left(
  \frac{4^L}{\sqrt{L}} (n^2 + L^{\alpha - 1}) \right)$ time and uses $O(nkL +
  L^{\beta})$ space, as required.
\end{proof}

Since $n^2 + L^{\alpha - 1} = O\left(n^{\max(2, \alpha - 1)}\right)$, and $L
\leq c_H \left(\frac{n}{k}\right)^{t_H}$, it follows that \textsc{MinH} is FPT
for parameter $\frac{n}{k}$.

\begin{restatable}{corollary}{corminhfpt}
    \label{thm:minh-fpt}
    Suppose $H$ is a measure on graphs, that is additive, balanced on trees,
    and computable in polynomial time on trees, on a real RAM. Then
    \textsc{MinH} is FPT for parameter $n/k$ on trees.
\end{restatable}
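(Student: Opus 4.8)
The plan is to reduce the corollary directly to the dynamic-programming algorithm of Lemma~\ref{lem:expl-h-tree}, which already carries out essentially all of the work. Given an instance $(T, k, \tau)$ of \textsc{MinH} on a tree with $n$ vertices, I would set $L := \min\bigl(c_H(n/k)^{t_H},\, n\bigr)$, where $c_H$ and $t_H$ are the constants supplied by the balancedness of $H$; run the algorithm of Lemma~\ref{lem:expl-h-tree} to obtain the minimum value $\mu$ of $H(T - S)$ over all $L$-trimming subsets $S$ of $k$ vertices; and answer \textsc{Yes} precisely when $\mu \le \tau$.

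The only step needing an argument is correctness, and this is where balancedness is used. If $(T,k,\tau)$ is a \textsc{No}-instance, then every $k$-subset $S$ has $H(T - S) > \tau$, so $\mu > \tau$ (with $\mu = +\infty$ when no $L$-trimming $k$-subset exists), and we correctly output \textsc{No}. If $(T,k,\tau)$ is a \textsc{Yes}-instance, then by Definition~\ref{def:balance} some witness $S^\star$ --- a set of $k$ vertices with $H(T - S^\star) \le \tau$ --- is $c_H(n/k)^{t_H}$-trimming, hence also $L$-trimming, since capping at $n$ is harmless (a forest on at most $n$ vertices is automatically $n$-trimmed); therefore $\mu \le H(T - S^\star) \le \tau$, and we correctly output \textsc{Yes}.

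For the running time I would substitute $L \le c_H(n/k)^{t_H}$ into the bound $O\!\left(\frac{4^{L}}{\sqrt{L}}\,(n^2 + L^{\alpha-1})\right)$ of Lemma~\ref{lem:expl-h-tree}. As $\alpha$ is a constant and $L \le n$, the factor $n^2 + L^{\alpha-1}$ is $O(n^{\max(2,\alpha-1)})$, a polynomial in $n$ that does not depend on the parameter; and $4^L/\sqrt{L} = 2^{O((n/k)^{t_H})}$ depends only on $n/k$. Hence the algorithm runs in $f(n/k)\cdot n^{O(1)}$ time for a computable function $f$, which is exactly the FPT guarantee for parameter $n/k$. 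I do not anticipate a genuine obstacle: the one place to be slightly careful is the handful of degenerate cases (for instance $k \ge n$, or the value of $H$ on the empty forest) and the verification that the dynamic program of Lemma~\ref{lem:expl-h-tree} really does range over all $L$-trimming sets of exactly $k$ vertices, but none of this affects the conclusion.
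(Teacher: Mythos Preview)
Your proposal is correct and follows essentially the same approach as the paper: the paper derives the corollary in a single sentence immediately preceding its statement, observing that with $L \le c_H(n/k)^{t_H}$ the bound of Lemma~\ref{lem:expl-h-tree} becomes $O\!\left(4^{c_H(n/k)^{t_H}} n^{\max(2,\alpha-1)}\right)$, which is FPT in $n/k$. Your write-up supplies the correctness argument (via balancedness) and the degenerate-case handling that the paper leaves implicit, but the method is identical.
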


With an appropriate threshold, we can combine the approaches of
\autoref{lem:bf-h-tree} and \autoref{lem:expl-h-tree} to give a subexponential time, polynomial space
algorithm for \textsc{MinH}.

\begin{restatable}{corollary}{corsubexpminh}
  \label{cor:minh-subexp}
  Suppose $H$ is a measure on graphs, that is additive, balanced on trees, and computable in polynomial time on
  trees, on a real RAM.
  Then there is a $2^{O\left(\left(n \log n\right)^{t_H/\left(t_H + 1\right)}\right)}$ time, polynomial space algorithm
  for \textsc{MinH} on trees, where $t_H$ is the constant given in \autoref{def:balance}.
\end{restatable}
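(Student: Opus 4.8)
The plan is to balance the two algorithms from Lemma~\ref{lem:bf-h-tree} and Lemma~\ref{lem:expl-h-tree} at an appropriate threshold on $k$. First I would note that Lemma~\ref{lem:bf-h-tree} gives running time $O(n^{k+\alpha})$, which is efficient when $k$ is small, while Lemma~\ref{lem:expl-h-tree} gives running time $O\!\left(\tfrac{4^L}{\sqrt L}(n^2 + L^{\alpha-1})\right)$ with $L = c_H(n/k)^{t_H}$, which is efficient when $k$ is large (so that $L$ is small). The idea is to pick a threshold value $k^\star$ of $k$: if $k \le k^\star$, run the brute-force algorithm; otherwise, run the DP algorithm. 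In the second regime, since $k > k^\star$ we have $L \le c_H (n/k^\star)^{t_H}$, so both branches are controlled by $k^\star$ alone, and we choose $k^\star$ to equalize the two exponents.

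For the first branch, $n^{k^\star + \alpha} = 2^{O(k^\star \log n)}$. For the second branch, the dominant factor is $4^L = 2^{O(L)} = 2^{O((n/k^\star)^{t_H})}$ (the polynomial factors $n^2 + L^{\alpha-1}$ contribute only an additional $2^{O(\log n)}$, which is absorbed). So I want to choose $k^\star$ so that $k^\star \log n$ and $(n/k^\star)^{t_H}$ are of the same order. Setting $k^\star \log n \approx (n/k^\star)^{t_H}$ gives $(k^\star)^{t_H + 1} \approx n^{t_H}/\log n$, i.e. $k^\star \approx \left(n^{t_H}/\log n\right)^{1/(t_H+1)}$. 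With this choice, $k^\star \log n = \Theta\!\left( n^{t_H/(t_H+1)} (\log n)^{1 - 1/(t_H+1)} \right) = \Theta\!\left( (n \log n)^{t_H/(t_H+1)} \right)$, which is exactly the claimed bound; one checks the other branch matches. Since $k^\star \le n$ is a valid budget and $k^\star \ge 1$ for $n$ large enough (and the statement is trivial for bounded $n$), this choice is legitimate. I would also need to confirm that, in both regimes, the output — the minimum value of $H(T-S)$ over all $S$ of size $k$ — is correct: in the first regime this is immediate from Lemma~\ref{lem:bf-h-tree}, and in the second, Lemma~\ref{lem:expl-h-tree} computes the minimum over $L$-trimming subsets, which by the balance property (Definition~\ref{def:balance}) includes an optimal witness whenever the instance is a \textsc{Yes}-instance, and otherwise can only overestimate, so comparing against $\tau$ still answers the decision problem correctly.

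For the space bound: the brute-force branch uses $O(n^\beta)$ space, and the DP branch uses $O(nk^\star L + L^\beta) = O(n^2 \cdot c_H(n/k^\star)^{t_H} + (\ldots)^\beta)$, which is polynomial in $n$ since $L \le n$; thus the whole algorithm runs in polynomial space.

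The main obstacle is purely the arithmetic of choosing $k^\star$ and verifying that the two exponents simplify to $(n\log n)^{t_H/(t_H+1)}$ after absorbing the lower-order polynomial factors into the $O(\cdot)$ in the exponent; there is also a minor subtlety in that $k$ is given as part of the input, so the dichotomy must be on the value of $k$ relative to $k^\star$ rather than on a parameter we control, but this is handled by simply checking which case we are in. No deep new idea is required beyond the balance property already in hand.
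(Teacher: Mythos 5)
Your proposal is correct and takes essentially the same approach as the paper: select between the brute-force algorithm of Lemma~\ref{lem:bf-h-tree} and the DP algorithm of Lemma~\ref{lem:expl-h-tree} based on whether $k$ lies above or below the threshold $k^{\ast} = n^{t_H/(t_H+1)} (\log n)^{-1/(t_H+1)}$, which equalizes the two exponents at $(n\log n)^{t_H/(t_H+1)}$. The extra care you take to justify that restricting to $L$-trimming subsets still answers the decision problem correctly is a welcome clarification that the paper leaves implicit.
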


\begin{proof}
  \autoref{lem:bf-h-tree} gives us an $O(n^{k+\alpha}) = 2^{O(k \log n)}$ time  
  algorithm for \textsc{MinH} on a tree. \autoref{lem:expl-h-tree} gives us an
  alternate $O\left(4^{c_H \left( n/k \right)^{t_H}} n^{\max(2,
          \alpha-1)}\right) = 2^{O\left(\left( n/k \right)^{t_H} + \log
  n\right)}$ time algorithm for the same problem. Note that the memory
  consumption of both algorithms is bound by $O(n^{\max(3, \beta)})$, so they
  are both polynomial in space.
  
  Let $k^{\ast} = n^{t_H/(t_H + 1)} \log^{-1/(t_H + 1)} n$. We select the
  former algorithm when $k \leq k^{\ast}$, and the latter algorithm otherwise.
  In both cases, our running time is bound by $2^{O\left(\left(n \log
              n\right)^{t_H/\left(t_H + 1\right)}\right)}$, as required.
\end{proof}


We now prove that IGL satisfies the requirements of \autoref{thm:minh-fpt} and \autoref{cor:minh-subexp}.
IGL is clearly additive, since pairs of vertices
belonging to different components contribute $\frac{1}{\infty} = 0$ to the IGL. We can easily
compute the IGL in $O(n^2)$ time, and $O(n)$ space, on the real-RAM by traversing from each vertex. Hence, it remains to show that IGL is balanced on
trees: it suffices to show that there is a constant $t_{\text{IGL}}$, such that any
subset of vertices whose deletion minimizes the IGL is
$O\left(\left(n/k\right)^{t_{\text{IGL}}} \right)$-trimming.

To do so, we choose to reason about the decrease in IGL caused by the removal
of a subset of $k$ vertices, rather than the IGL itself. Maximizing this
decrease (which we call \emph{utility}) is equivalent to minimizing the IGL of
the graph after removal.

\begin{definition}[Utility]
  \label{def-utility}
  Let $G = (V, E)$ be a graph. Then the utility of some $S \subseteq V$ is:
  \[
    \mathcal{U}_G(S) = IGL(G) - IGL(G - S).
  \]
  If $S = \{v\}$
, we write $\mathcal{U}_G(v)$ instead of
  $\mathcal{U}_G(\{v\})$, which we call the utility of $v$ in $G$.
\end{definition}

\noindent
Suppose $S = S' \dot\cup \{v\}$ is a subset of $k$ vertices in a tree $T$
with maximum utility. Necessarily, $v$ must have maximum utility in $T - S'$.
This means that $v$ has no less utility than any vertex in its component in $T
- S'$, and that it also has no less utility than the optimal vertex in any
other component. In this vein, we would like to consider the case when $k = 1$
so we can reason about the individual optimality of each vertex in an optimal
solution.

By first considering the sum of vertex utilities, we prove the following upper and lower bounds on the utility of the optimal choice of vertex, in this case.


\begin{restatable}{lemma}{lemtotalutility}
  \label{total-utility}
  Let $T = (V, E)$ be a tree with $n$ vertices. Then
  \[
    \sum_{v \in V} \mathcal{U}_T(v) = \binom{n}{2} + IGL(T).
  \]
\end{restatable}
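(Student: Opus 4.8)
The plan is to expand the left-hand side directly from the definition of utility and then switch the order of summation. Writing $\mathcal{U}_T(v) = IGL(T) - IGL(T-v)$, we get
\[
  \sum_{v \in V} \mathcal{U}_T(v) = n \cdot IGL(T) - \sum_{v \in V} IGL(T - v),
\]
so the whole task reduces to evaluating $\sum_{v \in V} IGL(T-v)$.

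Next I would rewrite $IGL(T-v)$ using the tree structure. For a pair $\{u,w\}$ with $u,w \neq v$, deleting $v$ either leaves $u$ and $w$ in different components — precisely when $v \in \mathcal{P}_T(u,w)$, in which case $d_{T-v}(u,w) = \infty$ and the pair contributes $0$ — or leaves the unique path intact, so that $d_{T-v}(u,w) = d_T(u,w)$. Hence $IGL(T-v) = \sum_{\{u,w\}:\, v \notin \mathcal{P}_T(u,w)} \frac{1}{d_T(u,w)}$, where the condition $v \notin \mathcal{P}_T(u,w)$ already subsumes $v \neq u, v \neq w$ since the endpoints lie on the path.

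Then I would swap the two sums. For a fixed pair $\{u,w\}$, the number of vertices $v$ with $v \notin \mathcal{P}_T(u,w)$ is $n - \abs{\mathcal{P}_T(u,w)} = n - d_T(u,w) - 1$, using the observation from the preliminaries that a path with $d_T(u,w)$ edges has $d_T(u,w)+1$ vertices. Therefore
\[
  \sum_{v \in V} IGL(T-v) = \sum_{\{u,w\} \subseteq V} \frac{n - d_T(u,w) - 1}{d_T(u,w)}
  = (n-1)\,IGL(T) - \binom{n}{2}.
\]
Substituting this back yields $\sum_{v \in V} \mathcal{U}_T(v) = n\,IGL(T) - (n-1)\,IGL(T) + \binom{n}{2} = IGL(T) + \binom{n}{2}$, as claimed.

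This argument is essentially a double-counting computation, so there is no serious obstacle; the only point requiring a little care is the case analysis showing that $IGL(T-v)$ drops exactly the pairs whose path passes through $v$ (this is where the tree hypothesis is used, via the fact that $d_T(u,w) = d_T(u,v) + d_T(v,w)$ iff $v \in \mathcal{P}_T(u,w)$), together with the clean identity $\abs{\mathcal{P}_T(u,w)} = d_T(u,w)+1$ that makes the inner count telescope.
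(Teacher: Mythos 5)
Your proof is correct and uses essentially the same approach as the paper: both rely on the observation that, in a tree, deleting $v$ severs exactly the pairs whose unique path contains $v$, then swap the order of summation and use $\abs{\mathcal{P}_T(u,w)} = d_T(u,w)+1$. The only cosmetic difference is that you count the complement (pairs \emph{not} through $v$, via $\sum_v IGL(T-v)$) where the paper counts pairs through $v$ directly, but the double-counting identity is the same.
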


\begin{proof}
  Since $T$ is a tree, there is precisely one simple path with length $d_T(u,
  v)$ for every $\{u, v\} \subseteq V$. Now suppose $w \in V$ and compare
  $d_{T-w}(u, v)$ to $d_T(u, v)$. If $w \in \mathcal{P}_T(u, v)$, then $u$ and $v$
  are necessarily disconnected in $T-w$, so $d_{T-w}(u, v) = \infty$. Otherwise,
  the path remains, and $d_{T-w}(u, v) = d_T(u, v)$.

  Hence, the utility of $w$ in $T$ can be written as follows.
  \begingroup
  \allowdisplaybreaks
  \begin{align*}
    \mathcal{U}_T(w)
  &= IGL(T) - IGL(T - w) \\
  &= \sum_{\{u, v\} \subseteq V} \frac{1}{d_T(u, v)} - \frac{1}{d_{T-w}(u, v)} \\
  &= \sum_{\substack{\{u, v\} \subseteq V \\ w \in \mathcal{P}_T(u, v)}}
    \frac{1}{d_T(u, v)}. \\
    \intertext{We can then sum this utility over all vertices in $T$.}
    \sum_{w \in V} \mathcal{U}_T(w)
    &= \sum_{w \in V} \sum_{\substack{\{u, v\} \subseteq V \\ w \in \mathcal{P}_T(u, v)}}
    \frac{1}{d_T(u, v)} \\
    &= \sum_{\{u, v\} \subseteq V} \sum_{w \in \mathcal{P}_T(u, v)} \frac{1}{d_T(u, v)} \\
    &= \sum_{\{u, v\} \subseteq V} \abs{\mathcal{P}_T(u, v)} \frac{1}{d_T(u, v)} \\
  &= \sum_{\{u, v\} \subseteq V} \frac{d_T(u, v) + 1}{d_T(u, v)} \\
  &= \binom{n}{2} + IGL(T),
  \end{align*}
  \endgroup
  as required.
\end{proof}

\begin{restatable}{lemma}{lemmaxtreeutilitylb}
  \label{max-tree-utility-lb}
  Let $T = (V, E)$ be a tree with $n \geq 2$ vertices. Then, $\max_{v \in V}
  \mathcal{U}_T (v) \geq n/2$.
\end{restatable}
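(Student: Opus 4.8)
The plan is to combine Lemma~\ref{total-utility} with a simple averaging argument. Since $\sum_{v \in V}\mathcal{U}_T(v) = \binom{n}{2} + IGL(T)$, some vertex must attain at least the average, so $\max_{v\in V}\mathcal{U}_T(v) \geq \frac{1}{n}\left(\binom{n}{2} + IGL(T)\right)$. On its own the $\binom{n}{2}$ term only gives $\frac{n-1}{2}$, which is just shy of the target, so the point is that $IGL(T)$ cannot be ignored: each of the $n-1$ edges of $T$ is a pair of vertices at distance $1$, contributing $1$ to the sum defining $IGL(T)$, whence $IGL(T) \geq n-1$.

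Plugging this in, $\max_{v\in V}\mathcal{U}_T(v) \geq \frac{1}{n}\left(\binom{n}{2} + (n-1)\right) = \frac{(n-1)(n+2)}{2n} = \frac{n}{2} + \frac{1}{2} - \frac{1}{n}$, and since $n \geq 2$ the trailing terms are non-negative, giving $\max_{v\in V}\mathcal{U}_T(v) \geq n/2$ as required (with equality when $n = 2$).

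I do not expect a genuine obstacle here; the only thing to notice is that the crude averaging bound $\frac{n-1}{2}$ is not quite enough and that the trivial estimate $IGL(T) \geq n-1$ closes the gap. An alternative route would use Lemma~\ref{centroid-disconnect} together with the identity $\mathcal{U}_T(w) = \sum_{\{u,v\}\colon w\in\mathcal{P}_T(u,v)} \frac{1}{d_T(u,v)}$ from the proof of Lemma~\ref{total-utility} at a centroid $w$, but that would require controlling the distances of the $\geq n^2/4$ pairs through the centroid, which is more delicate; the averaging argument is cleaner and is what I would write up.
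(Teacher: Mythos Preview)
Your proposal is correct and follows essentially the same approach as the paper: both use Lemma~\ref{total-utility} to get the average utility $\frac{1}{n}\bigl(\binom{n}{2}+IGL(T)\bigr)$, then the trivial bound $IGL(T)\ge n-1$ from the $n-1$ edges to push the average up to $n/2$. The only cosmetic difference is that the paper splits the average as $\frac{n-1}{2}+\frac{IGL(T)}{n}\ge\frac{n-1}{2}+\frac{1}{2}$ whereas you combine the terms into $\frac{(n-1)(n+2)}{2n}$, but the argument is the same.
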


\begin{proof}
  \sloppy
  From \autoref{total-utility}, we know the average utility among the vertices in
  $T$ is $\frac{1}{n} \left( \binom{n}{2} + IGL(T) \right) = \frac{n-1}{2} +
  \frac{1}{n} IGL(T)$. Since $T$ has $n - 1$ edges, each connecting a pair of vertices
  distance 1 apart, $IGL(T) \geq n-1$ so $\frac{1}{n} IGL(T) \geq \frac{n-1}{n}
  \geq \frac{1}{2}$ whenever $n \geq 2$. Hence, the average utility is at least
  $n/2$ and the result follows.
\end{proof}

%


\begin{restatable}{lemma}{lemmaxtreeutilityub}
  \label{max-tree-utility-ub}
  Let $G = (V, E)$ be a tree with $n$ vertices. Then $\mathcal{U}_G(v) \leq IGL(G)
  \leq \frac{1}{2} n(n-1)$ for any vertex $v \in V$.
\end{restatable}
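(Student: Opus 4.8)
The plan is to handle the two inequalities separately, both being immediate from the definition of IGL as a sum of reciprocals of distances. For the first inequality, $\mathcal{U}_G(v) \leq IGL(G)$, I would start from the definition $\mathcal{U}_G(v) = IGL(G) - IGL(G - v)$ and observe that $IGL(G - v) = \sum_{\{u,w\} \subseteq V \setminus \{v\}} \frac{1}{d_{G-v}(u,w)}$ is a sum of non-negative terms, using the convention $\frac{1}{\infty} = 0$ for pairs disconnected in $G - v$. Hence $IGL(G-v) \geq 0$, and subtracting a non-negative quantity from $IGL(G)$ can only decrease it, giving $\mathcal{U}_G(v) \leq IGL(G)$.

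For the second inequality, I would bound each summand of $IGL(G) = \sum_{\{u,w\} \subseteq V} \frac{1}{d_G(u,w)}$. For any two distinct vertices $u, w$ we have $d_G(u,w) \geq 1$ (and it is finite since $G$, being a tree, is connected), so $\frac{1}{d_G(u,w)} \leq 1$. Summing over all $\binom{n}{2}$ unordered pairs of vertices yields $IGL(G) \leq \binom{n}{2} = \frac{1}{2} n(n-1)$. Chaining the two inequalities gives $\mathcal{U}_G(v) \leq IGL(G) \leq \frac{1}{2} n (n-1)$, as required.

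There is no real obstacle here: the statement is a routine consequence of non-negativity of the IGL terms and the trivial lower bound $d_G(u,w) \geq 1$ on distances between distinct vertices. The only point worth stating carefully is that connectivity of the tree ensures the distances appearing in $IGL(G)$ are finite (so the per-term bound $\frac{1}{d_G(u,w)} \le 1$ is not vacuous), while disconnection in $G - v$ only helps the first inequality.
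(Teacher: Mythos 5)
Your proof is correct and follows essentially the same reasoning as the paper's: both bound $\mathcal{U}_G(v)$ by $IGL(G)$ via non-negativity of $IGL(G-v)$, and both bound $IGL(G)$ by $\binom{n}{2}$ using the fact that each of the $\binom{n}{2}$ pairs contributes at most $1$. The extra remark about connectivity of the tree guaranteeing finite distances is a small but harmless elaboration the paper leaves implicit.
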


\begin{proof}
  From Definition \ref{def-utility}, we have that $\mathcal{U}_G(v) = IGL(G) - IGL(G
  - v)$. It is obvious that $IGL(G - v) \geq 0$, so $\mathcal{U}_G(v) \leq IGL(G)$.
  The second inequality holds because there are $\binom{n}{2} = \frac{1}{2}
  n(n-1)$ vertex pairs in the graph, and each contributes at most 1 to the IGL.
\end{proof}

\noindent
Note that equality holds for the second inequality precisely when $G$ is
the complete graph $K_n$, on $n$ vertices. On trees, one can show that the IGL is at most
$\frac{1}{4}(n^2 + n - 2)$, with equality when the tree is a star.

Next, we show that the removal of a vertex with maximum utility leaves the remaining forest somewhat balanced.
Specifically, it is never the case that one tree in this forest is so large that it contains all but $o(n^{1/4})$ vertices.

\begin{restatable}{theorem}{thmlargestcomp}
  \label{largest-component-min-size}
  Let $T = (V, E)$ be an unweighted tree with $n \geq 3$ vertices and suppose
  $v \in V$ minimizes $IGL(T - v)$.
  Further, suppose $C$ is a
  connected component in $T - v$ containing $l$ vertices and let $r = n - l -
  1$ be the number of vertices in $T - v$ not in $C$. Then, there is a constant
  $0 < c < 1$ independent of $n$ such that $r \geq c n^{1/4}$.
\end{restatable}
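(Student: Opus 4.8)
The plan is to reason about the \emph{utility} $\mathcal{U}_T(v) = IGL(T) - IGL(T-v)$, which $v$ maximizes by hypothesis; recall from the proof of \autoref{total-utility} that $\mathcal{U}_T(w) = \sum_{\{a,b\}\in\mathcal{P}^{-1}_T(w)} 1/d_T(a,b)$ for every vertex $w$. First I would dispose of the easy cases. If $l < n/2$ then $r = n-1-l > n/2-1$, which already exceeds $cn^{1/4}$ for any $c<1$ and all $n\ge 3$. And if $n$ is bounded by an absolute constant, note that no leaf can maximize utility when $n\ge 3$: comparing a leaf $u$ with its neighbour $u'$ via the pair-count formula gives $\mathcal{U}_T(u')-\mathcal{U}_T(u) = \sum_{w\ne u,u'} 1/d_T(u',w) > 0$, so $v$ is not a leaf, $T-v$ has at least two components, and $r\ge 1 \ge cn^{1/4}$ once $c$ is small enough. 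Hence assume $l\ge n/2$ and $n$ large; then $C$ is the unique largest component of $T-v$ and $v$ has a single neighbour $v'\in C$.

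The heart of the argument is a comparison of $v$ and $v'$. The edge $vv'$ splits $V$ into $C$ and $V\setminus C$, the latter of size $r+1$. Call a vertex pair \emph{crossing} if its path uses $vv'$, i.e.\ it has one endpoint in each part; every crossing pair passes through \emph{both} $v$ and $v'$, while a non-crossing pair passes through at most one of them. Writing $A$ for the total weight $\sum 1/d_T$ of the crossing pairs, this yields $\mathcal{U}_T(v) = A + B$ and $\mathcal{U}_T(v') = A + \mathcal{U}_C(v')$, where $B$ is the weight of the pairs internal to $V\setminus C$ that pass through $v$, and where the pairs internal to $C$ through $v'$ contribute exactly $\mathcal{U}_C(v')$ because paths inside $C$ keep their length in $T$. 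Optimality of $v$, i.e.\ $\mathcal{U}_T(v)\ge\mathcal{U}_T(v')$, cancels $A$ and gives $\mathcal{U}_C(v')\le B$. Since $B$ only involves the $r+1$ vertices of $V\setminus C$, and such internal pairs have $d_T\ge 1$ (resp.\ $d_T\ge 2$ if neither endpoint is $v$), we get $B \le r + r^2/4$, hence $\mathcal{U}_C(v') = O(r^2)$.

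Next I bound $\mathcal{U}_T(v) = A+B$ from above purely in terms of $r$. For a crossing pair with endpoints $a\in C$ and $w\in V\setminus C$ one has $d_T(a,w) \ge d_C(a,v')+1$, so $A \le (r+1)\sum_{a\in C}\frac{1}{d_C(a,v')+1} \le (r+1)\bigl(1 + \mathcal{U}_C(v')\bigr)$, the last step because $\sum_{a\ne v'} 1/d_C(a,v')$ is precisely the contribution to $\mathcal{U}_C(v')$ of the pairs having $v'$ as an endpoint. Substituting $\mathcal{U}_C(v') \le r + r^2/4$ gives $\mathcal{U}_T(v) = O(r^3)$. On the other hand, the pairs inside $C$ contribute $\mathcal{U}_C(w)$ to $\mathcal{U}_T(w)$ for every $w\in C$, so $\mathcal{U}_T(v) = \max_w \mathcal{U}_T(w) \ge \max_{w\in C}\mathcal{U}_C(w) \ge l/2 \ge n/4$ by \autoref{max-tree-utility-lb} applied to the tree $C$ (which has $l\ge 2$ vertices, the case $l=1$ falling under $l<n/2$). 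Combining the two estimates, $r^3 = \Omega(n)$, so $r = \Omega(n^{1/3})$, which in particular gives $r\ge cn^{1/4}$ for a suitable absolute constant $c$, with the bounded-$n$ cases absorbed into the choice of $c$. (As a sanity check one could instead lower-bound $\mathcal{U}_T(v)$ using the centroid of $T$, via $\abs{\mathcal{P}^{-1}_T(\text{centroid})} \ge n^2/4$ from \autoref{centroid-disconnect} divided by the diameter; but this only recovers a bound for small-diameter trees, which is why I prefer the comparison with $v'$, as it handles all trees uniformly.)

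I expect the comparison step to be the main obstacle. Everything hinges on two observations: that the adjacent vertices $v$ and $v'$ share all crossing pairs with identical weights, so $\mathcal{U}_T(v')-\mathcal{U}_T(v)$ collapses to $\mathcal{U}_C(v')$ minus a term supported on the tiny set $V\setminus C$; and that this same quantity $\mathcal{U}_C(v')$ also governs the crossing weight $A$, so a single bound on $\mathcal{U}_C(v')$ controls all of $\mathcal{U}_T(v)$. Getting the pair-partition bookkeeping exactly right — which pairs land in $A$, in $B$, or in $\mathcal{U}_C(v')$, and what lower bounds on their distances are available — is the delicate part; once that is in place, the remaining inequalities are routine constant-chasing.
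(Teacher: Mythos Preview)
Your proof is correct and takes a genuinely different, more direct route than the paper --- and in fact proves the stronger bound $r = \Omega(n^{1/3})$. The paper works with $C' = C \cup \{v\}$, first proving $\mathcal{U}_T(v) \le \tfrac12 r(r+1) + (r+1)\mathcal{U}_{C'}(v)$ and the cruder $\mathcal{U}_T(v) \le rn$; combining the former with $\mathcal{U}_T(v) \ge n/2$ forces $\mathcal{U}_{C'}(v)$ to be large, hence many vertices of $C$ lie within distance $D = \Theta(r)$ of $v$. The paper then takes a centroid $v_A$ of that neighbourhood, lower-bounds $\mathcal{U}_T(v_A) = \Omega(n^2/r^3)$ via \autoref{centroid-disconnect}, and plays this against $\mathcal{U}_T(v)\le rn$ to obtain $r = \Omega(n^{1/4})$. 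Your neighbour-comparison idea --- cancelling the shared crossing weight $A$ between $v$ and $v'$ so that optimality yields $\mathcal{U}_C(v') \le B = O(r^2)$, and then reusing this same bound to control $A$ itself --- bypasses the centroid detour entirely and gives the sharper $\mathcal{U}_T(v) = O(r^3)$, whence $r = \Omega(n^{1/3})$. This is a real improvement: pushed through \autoref{thm:largest-component-max-size} it would reduce $t_{\mathrm{IGL}}$ from $5$ to $4$ and the headline running time from $2^{O((n\log n)^{5/6})}$ to $2^{O((n\log n)^{4/5})}$. Two cosmetic nits: your leaf identity should be an inequality $\mathcal{U}_T(u') - \mathcal{U}_T(u) \ge \sum_{w\ne u,u'} 1/d_T(u',w)$ in general (equality only when $\deg(u')=2$), and ``for any $c<1$'' in the $l<n/2$ case should read ``for $c$ small enough'', as you effectively acknowledge later.
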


\begin{proof}
  We may assume $l \geq 1$, since the case when $l = 0$ is trivial. We may also
  assume that $r \geq 1$, since if $r = 0$, $v$ is a leaf, which contradicts
  its optimality since $n\ge 3$.
  
  Since $T$ is a tree, each neighbour of $v$ belongs to a different
  component in $T - v$. Suppose $x_C$ is the neighbour of $v$ in $C$ and let
  $C'$ be the subtree $T[V(C) \cup \{v\}]$. Thus,
  $v$ is a leaf of $C'$. We use this structure (pictured in
  Figure~\ref{fig:arg-pos}) to give two different, but related, upper bounds
  for the utility $\mathcal{U}_T(v)$ of $v$ in $T$.

  \begin{figure}
    \tikzstyle{vertex}=[circle,draw=black,thick,minimum size=5pt,inner sep=0pt]
    \tikzstyle{selected vertex} = [vertex, fill=red!24]
    \tikzstyle{edge} = [draw,thick,-]
    \tikzstyle{weight} = [font=\small]

    \begin{center}
        \begin{tikzpicture}[scale=0.8]
        \usetikzlibrary{decorations.pathmorphing}


        \draw node[] () at (0, 0.3) {$v$};
        \draw node[] () at (1, 0.3) {$x_C$};

        \draw node[vertex] (aa) at (-2, 0.9) {};
        \draw node[vertex] (ab) at (-2, 0.5) {};
        \draw node[vertex] (ca) at (-2, -0.7) {};
        \draw node[vertex] (a) at (-1, 0.7) {};
        \draw node[vertex] (b) at (-1, 0) {};
        \draw node[vertex] (c) at (-1, -0.7) {};
        \draw node[vertex] (v) at (0, 0) {};
        \draw node[vertex, fill=gray!30] (xc) at (1, 0) {};

        \draw node[vertex, fill=gray!30] (d) at (2, 0.7) {};
        \draw node[vertex, fill=gray!30] (e) at (2, -0.7) {};

        \draw node[vertex, fill=gray!30] (f) at (3, 1.0) {};
        \draw node[vertex, fill=gray] (g) at (3, 0.4) {};
        \draw node[] () at (3, 0.1) {$v_A$};

        \draw node[vertex, fill=gray!30] (h) at (4, 0.8) {};
        \draw node[vertex, fill=gray!30] (i) at (4, 0.4) {};
        \draw node[vertex, fill=gray!30] (j) at (4, 0.0) {};
        \draw node[vertex, fill=gray!30] (k) at (4, -0.4) {};

        \draw node[vertex, fill=gray!30] (l) at (5, 1.0) {};
        \draw node[vertex, fill=gray!30] (m) at (5, 0.6) {};
        \draw node[vertex, fill=gray!30] (n) at (5, -0.2) {};
        \draw node[vertex, fill=gray!30] (o) at (5, -0.6) {};

        \draw node[vertex] (p) at (6, 1.0) {};
        \draw node[vertex] (q) at (6, -0.6) {};

        \draw node[vertex] (qa) at (7, -0.2) {};
        \draw node[vertex] (qb) at (7, -1.0) {};
        \draw node[vertex] (qc) at (7, -0.6) {};

        \foreach \source / \dest in
        {a/v, b/v, c/v, v/xc, xc/d, xc/e, d/f, d/g, g/h, g/i, g/j, g/k,
         h/l, h/m, k/n, k/o, l/p, o/q, aa/a, ab/a, ca/c, q/qa, q/qb, q/qc}
        \path[edge] (\source) -- node[weight] {} (\dest);

        \draw
          [decorate,decoration={brace,amplitude=10pt},xshift=-4pt,yshift=0pt]
          (5.3,-1.2) -- (0,-1.2) node [black,midway,xshift=-0.6cm] {};
        \draw node[] () at (2.5, -1.8) {\small $T_A$};

        \draw
          [decorate,decoration={brace,amplitude=10pt},xshift=-4pt,yshift=0pt]
          (7.6,-1.2) -- (6,-1.2) node [black,midway,xshift=-0.6cm] {};
        \draw node[] () at (6.5, -1.8) {\small vertices of $B$};

        \draw
          [decorate,decoration={brace,amplitude=10pt},xshift=-4pt,yshift=0pt]
          (7.3,-2) -- (0,-2) node [black,midway,xshift=-0.6cm] {};
        \draw node[] () at (3.5, -2.7) {\small $C'$};

        \draw
          [decorate,decoration={brace,amplitude=10pt},xshift=-4pt,yshift=0pt]
          (1,1.5) -- (7.3,1.5) node [black,midway,xshift=-0.6cm] {};
        \draw node[] () at (4, 2.2) {\small $C$, containing $l$ vertices};

        \draw
          [decorate,decoration={brace,amplitude=10pt},xshift=-4pt,yshift=0pt]
          (-2.2,1.5) -- (-0.6,1.5) node [black,midway,xshift=-0.6cm] {};
        \draw node[align=center] () at (-1.5, 2.2) {\small $r$ vertices};
      \end{tikzpicture}

      \caption{Layout of the vertices of $T$, in
        Theorem~\ref{largest-component-min-size}.  Shaded vertices are in $A$,
        and are no more than $D = 5$ away from $v$.  The value of $D$ here has
        chosen for example's sake, and is not the true value constructed in the
      proof.}
    \label{fig:arg-pos}
    \end{center}
  \end{figure}
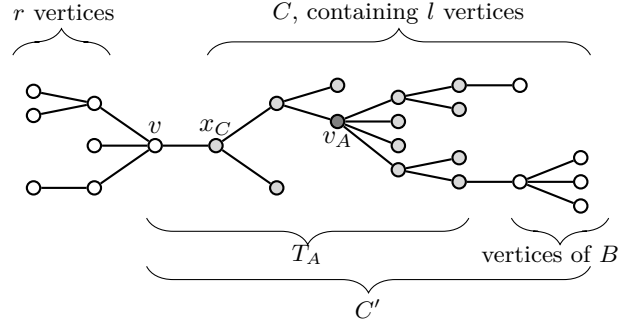

  \begin{restatable}{claim}{claimvutilityub}
    \label{v-utility-ub}
    $\mathcal{U}_T(v) \leq \frac{1}{2} r(r+1) + (r+1) \mathcal{U}_{C'}(v)$.
  \end{restatable}

  \begin{proof}
      Let us upper bound $\mathcal{U}_T(v)$ by considering the utility of $v$ in $C'$
      and also in $T - V(C)$. There are $n - l$ vertices in $T - V(C)$, so by
      Lemma \ref{max-tree-utility-ub}, we have that $\mathcal{U}_{T - V(C)}(v) \leq
      \frac{1}{2} (n-l)(n-l-1) = \frac{1}{2} r(r+1)$. This accounts for the pairs
      of vertices disconnected by the deletion of $v$ in $T - V(C)$.

      We still need to consider such pairs where one vertex is in $C$, and the
      other is in $T - V(C)$ (this includes $v$). Since $v$ is a leaf in
      $C'$, the only pairs of vertices connected in $C'$ that are disconnected in
      $C = C' - v$ are those of the form $\{v, v_C\}$, where $v_C$ ranges over
      $V(C)$. Now let $u$ be a vertex in $V \setminus V(C)$. The path from $u$ to
      $v_C$ must pass through $v$, and thus $d_T(u, v_C) \geq d_T(v, v_C)$. Hence,
      the contribution of each disconnected $\{u, v_C\}$ pair is at most that of
      $\{v, v_C\}$ towards $\mathcal{U}_T(v)$. Putting these inequalities together
      gives us
      \begin{align*}
          \mathcal{U}_T(v)
          &= \sum_{\{p, q\} \in \mathcal{P}^{-1}_T(v)} \frac{1}{d_T(p, q)} \\
          &= \mathcal{U}_{T - V(C)}(v) +
          \sum_{\substack{u \in V \setminus V(C) \\ v_C \in V(C)}} \frac{1}{d_T(u, v_C)} \\
          &\leq \frac{1}{2} r(r+1) + 
          \abs{V \setminus V(C)} \sum_{v_C \in V(C)} \frac{1}{d_T(v, v_C)} \\
          &\leq \frac{1}{2} r(r+1) + (r+1) \sum_{v_C \in V(C)} \frac{1}{d_T(v, v_C)} \\
          &= \frac{1}{2} r(r+1) + (r+1) \mathcal{U}_{C'}(v),
      \end{align*}
      as required.
  \end{proof}

  \begin{restatable}{claim}{claimlooseleafutility}
    \label{loose-leaf-utility}
    $\mathcal{U}_T(v) \leq rn$.
  \end{restatable}
  
  \begin{proof}
      Since $v$ is a leaf of $C'$, it is distance 1 away from its sole neighbour,
      and only this neighbour, in $C'$. Also, the only pairs disconnected by
      $v$'s removal in $C'$ are those containing $v$ itself. Now there are $l-1$
      other vertices in $C'$, each at least distance 2 away from $v$.  Hence,
      $\mathcal{U}_{C'}(v) \leq 1 + \frac{l-1}{2} = \frac{l+1}{2} = \frac{n-r}{2}$.

      Since $r \geq 1$, we know that $r + 1 \leq 2r$. Hence, by Claim \ref{v-utility-ub} 
      \begin{align*}
          \mathcal{U}_T(v) &\leq r^2 + 2r \mathcal{U}_{C'}(v) \\
                           &\leq r^2 + r(n-r) \\
                           &= rn,
      \end{align*}
      as required.
  \end{proof}

  \noindent
  Since the utility of deleting $v$ is maximal among all vertices, and $n \geq
  2$, we know $\mathcal{U}_T(v) \geq n/2$ from \autoref{max-tree-utility-lb}.
  Combining this with Claim \ref{v-utility-ub} and rearranging gives
  \begin{align}
    \mathcal{U}_{C'}(v) &\geq \frac{n - r(r+1)}{2(r+1)}. \label{path-utility-lb}
  \end{align}
  Suppose, for a contradiction, that $r < \frac{1}{15}n^{1/4}$. Since $r$ is
  purported to be relatively small, $\mathcal{U}_{C'}(v)$ must be rather large (note
  it is proportional to $n$). Intuitively, this implies that many vertices in
  $C'$ are close to $v$, and hints towards a more central choice of vertex to
  delete. We will formally show that such a vertex exists, and is a more optimal
  choice.

  Fix some distance $D$. We can divide the vertices of $C$ into two groups, $A$
  and $B$: those at most distance $D$ from $v$ in $C'$ (and thus, also in $T$) and
  those that are not, respectively. Suppose that $\abs{A} = t$ and that $\abs{B}
  = \abs{V(C)} - t$. We have the following upper bound:
  \begin{align}
    \mathcal{U}_{C'}(v) &\leq t + \frac{\abs{V(C)} - t}{D + 1} \leq t + \frac{n - t}{D + 1}, \label{t-distribution}
  \end{align}
  because each vertex in $B$ is at least distance $D+1$ away from $v$, and
  $\abs{V(C)} \leq n$. Note that we do not account for $v$ itself, since
  the distance to itself does not contribute to its utility.

  Recall that $r < \frac{1}{15} n^{1/4}$. It is easy to see that $r(r+1) \leq
  n/2$. Combining this with
  \eqref{path-utility-lb} and \eqref{t-distribution} gives us the following
  inequality:
  \begin{align*}
    \frac{n}{4(r+1)} \leq \mathcal{U}_{C'}(v) \leq t + \frac{n-t}{D+1},
  \end{align*}
  from which we can obtain
  \begin{align*}
    tD \geq \frac{n(D+1)}{4(r+1)} - n.
  \end{align*}
  If we choose $D = 8(r+1)-1$, it holds that $t \geq \frac{n}{D} =
  \frac{n}{8r+7} \geq \frac{n}{15r}$.

  Consider the subgraph (a tree) $T_A$ induced by the vertex set $A \dot\cup
  \{v\}$. $T_A$ contains at least two vertices as $v$ and $x_C$ both must be in
  $A$. Also, since $T_A$ is a tree, by Lemma~\ref{lem:centroid-half} it must
  have a centroid. Let one of the centroids of $T_A$ be $v_A$. The diameter of
  $T_A$ is at most $2D$,
  since every vertex in $T_A$ is at most distance $D$ from $v$.
  Combining this with \autoref{centroid-disconnect}, we have
  \begin{align*}
      \mathcal{U}_{T_A}(v_A) &\geq \frac{t^2}{8D} \geq \frac{n^2}{8D^3} \geq \frac{n^2}{8(15)^3r^3}.
  \end{align*}

  \noindent
  Now every pair in $T_A$ that is disconnected by the deletion of $v_A$ is also
  disconnected in $T$ by the deletion of $v_A$, so $\mathcal{U}_{T_A}(v_A) \leq
  \mathcal{U}_T(v_A)$. Also, by the optimality of $v$ in $T$, we have that
  $\mathcal{U}_T(v_A) \leq \mathcal{U}_T(v)$. Hence, using the result of Claim
  \ref{loose-leaf-utility}, we can conclude that
  \begin{align*}
    \frac{n^2}{8(15)^3r^3} \leq \mathcal{U}_{T_A}(v_A) \leq \mathcal{U}_T(v) \leq rn.
  \end{align*}
  Thus, we have that $r^4 \geq \frac{n}{8(15)^3} \geq \frac{n}{15^4}$,
  so $r \geq \frac{1}{15}n^{1/4}$, which is a contradiction. The result follows
  with a choice of $c = \frac{1}{15}$.
\end{proof}

We can use this result to finally upper bound the number of vertices in any
component after an optimal set of vertices has been removed.

\begin{theorem}
  \label{thm:largest-component-max-size}

  Let $T = (V, E)$ be a tree with $n$ vertices, and let $S \subseteq V$ be some
  subset of vertices such that $\abs{S} = k \geq 1$. There exists a positive
  constant $c'$, independent of $T$ and $k$, such that whenever $S$ minimizes
  $IGL(T - S)$, $S$ is $\left(c'\left( n/k \right)^5\right)$-trimming.
  
\end{theorem}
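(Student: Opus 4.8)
The plan is to turn the optimality of $S$ into a purely \emph{local} statement -- that each deleted vertex behaves like an optimal single deletion on the subtree it cuts off -- and then feed these local pieces into \autoref{largest-component-min-size}.

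\emph{Localization and a marginal-utility bound.} Write $S=\{v_1,\dots,v_k\}$ and, for each $i$, let $D_i$ be the component of $T-(S\setminus\{v_i\})$ containing $v_i$. A vertex-swap argument (using that $IGL$ is additive) shows more generally that for every $A\subseteq S$ and every component $\Gamma$ of $T-(S\setminus A)$ the set $A\cap V(\Gamma)$ minimizes $IGL(\Gamma-\,\cdot\,)$ over all $\abs{A\cap V(\Gamma)}$-subsets of $V(\Gamma)$; in particular $v_i$ is an optimal single deletion in the tree $D_i$, so $v_i$ has maximum utility in $T-(S\setminus\{v_i\})$. Fix a largest component $C^{*}$ of $T-S$ and put $L^{*}=\abs{V(C^{*})}$, which we may assume is large (a bounded $L^{*}$ is absorbed into $c'$). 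For each $i$, $C^{*}$ is either a component of $T-(S\setminus\{v_i\})$ or an induced subtree of $D_i$, and in both cases utility of a vertex $w\in V(C^{*})$ can only grow in the larger ambient tree; hence, choosing $w$ to maximize $\mathcal{U}_{C^{*}}(w)$ and invoking \autoref{max-tree-utility-lb}, $\mathcal{U}_{D_i}(v_i)=\mathcal{U}_{T-(S\setminus\{v_i\})}(v_i)\ge\mathcal{U}_{C^{*}}(w)\ge L^{*}/2$ for every $i$.

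\emph{Small budgets.} Since $D_i$ is a connected subtree of $T$ meeting $S$ only in $v_i$, a pair $\{p,q\}\subseteq V(D_i)$ contributes to $\mathcal{U}_{D_i}(v_i)$ only when $v_i$ is the unique vertex of $S$ on the $p$--$q$ path; hence the sums $\mathcal{U}_{D_i}(v_i)$ charge pairwise disjoint sets of vertex pairs, and $\sum_i\mathcal{U}_{D_i}(v_i)\le IGL(T)-IGL(T-S)\le IGL(T)\le\tfrac14(n^2+n-2)$. Combined with the previous paragraph this gives $kL^{*}/2\le\tfrac14(n^2+n-2)$, so $L^{*}\le n^2/k$. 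As $n^2/k\le(n/k)^5$ exactly when $k\le n^{3/4}$, the claim already holds with $c'=1$ in that range.

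\emph{Large budgets.} For $k>n^{3/4}$ one sharpens the count with \autoref{largest-component-min-size}. From $L^{*}/2\le\mathcal{U}_{D_i}(v_i)\le IGL(D_i)\le\tfrac14\abs{V(D_i)}(\abs{V(D_i)}+1)$ (\autoref{max-tree-utility-ub}) we get $\abs{V(D_i)}\ge\sqrt{L^{*}}\ge3$, so \autoref{largest-component-min-size} applies to $(D_i,v_i)$: the vertices of $D_i$ outside the largest component of $D_i-v_i$ (the \emph{light part} of $D_i$) number at least $c\abs{V(D_i)}^{1/4}\ge c\,L^{*\,1/8}$, and all of them lie in components of $T-S$ adjacent to $v_i$. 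The goal is to conclude $n=\Omega\!\big(k\,L^{*\,1/5}\big)$, hence $L^{*}=O((n/k)^5)$: for the $v_i$ adjacent to $C^{*}$ one has the stronger $\abs{V(D_i)}\ge L^{*}+1$, so their light parts have size $\ge c\,L^{*\,1/4}$ and are pairwise disjoint and disjoint from $C^{*}$ by acyclicity of $T$; the remaining vertices of $S$ are absorbed by peeling $T-S$ in layers around $C^{*}$ and reapplying the localization principle (each layer of deleted vertices is itself an optimal solution for the subtree it separates), charging $\Omega(L^{*\,1/8})$ fresh vertices to each $v_i$, while the tree identity ``(number of component-to-$S$ adjacencies) $-$ (number of components) $<k$'' keeps the total overcounting linear in $k$.

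The main obstacle is precisely this last counting step for large $k$: a single component of $T-S$ may be adjacent to many vertices of $S$, so the subtrees $D_i$ overlap heavily, and the light parts must be charged with care so that the overcounting stays $O(k)$; it is the need to relax the clean per-vertex estimate $\abs{V(D_i)}\ge\sqrt{L^{*}}$ to the overlap-robust $\Omega(L^{*\,1/5})$ vertices per deletion that forces the exponent $5$.
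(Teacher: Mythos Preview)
Your localization step and the small-budget case ($k\le n^{3/4}$) are correct and pleasant: the disjointness of the pair-families charged to each $\mathcal{U}_{D_i}(v_i)$ is exactly right, and $kL^*/2\le IGL(T)$ does give $L^*\le n^2/k\le (n/k)^5$ in that range. The large-budget case, however, is not a proof. You yourself flag the overcounting as ``the main obstacle'' and then offer only a sketch (``peeling in layers'', a tree identity on adjacencies) without carrying it out. Even granting that the light parts for the $v_i$ adjacent to $C^*$ are disjoint (which acyclicity does give), for the remaining $v_i$ your only size bound is $\abs{V(D_i)}\ge\sqrt{L^*}$, so \autoref{largest-component-min-size} yields light parts of size merely $\Omega(L^{*\,1/8})$; summing this would give $L^*=O((n/k)^8)$, not $(n/k)^5$. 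Your final sentence explains \emph{why} one would need $\Omega(L^{*\,1/5})$ per deletion, but provides no mechanism to obtain it.

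The paper avoids the overlap issue altogether by applying \autoref{largest-component-min-size} to a \emph{single} well-chosen $s\in S$. Collapse each component of $T-S$ to a node and root the resulting tree $T'$ at the largest component $R_r$. By pigeonhole, some $s\in S$ has children in $T'$ whose sizes sum to less than $n/k$. Your own localization says $s$ is optimal in its component of $T-(S\setminus\{s\})$, so \autoref{largest-component-min-size} there gives $n/k>\sum_j\abs{R_{s_j}}\ge c\,\abs{R_{p(s)}}^{1/4}$, whence the parent component satisfies $\abs{R_{p(s)}}\le c^{-4}(n/k)^4$. If $R_{p(s)}=R_r$, done. Otherwise your marginal-utility bound $\mathcal{U}_{T-(S\setminus\{s\})}(s)\ge\abs{R_r}/2$ combines with the crude count ``at most $(n/k+1)(n/k+\abs{R_{p(s)}})$ paths through $s$'' to give $\abs{R_r}\le 8c^{-4}(n/k)^5$. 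One application of the lemma, no disjointness bookkeeping.
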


\begin{proof}

  We will call the components of $T - S$ \emph{remaining components} and denote
  each of them by their vertex set. Suppose the remaining
  components are $R = \{R_1, R_2, \dots, R_{\abs{R}}\}$, where $R_i \subseteq
  V$ and $S \dot\cup R_1 \dot\cup \dots \dot\cup R_{\abs{R}} = V$. We need to
  show that $\abs{R_i} \leq c'\left( n/k \right)^5$ for each $R_i$. 

  We first construct a new graph $T' = (V', E')$ by collapsing each of the remaining components. Formally, $V' = R
  \dot\cup S$, and, for each $R_i \in R$ and $s \in S$, $\{R_i, s\} \in E'$ if and only if there exists some $r \in R_i$
  such that $\{r, s\} \in E$. It can be seen that $T'$ is necessarily a tree, and that every $R_i$ is only incident to
  elements in $S$.  For the remainder of the proof, we further assume that every element in $S$ is only incident to
  remaining components in $T'$: if this is not the case, one can add a ``dummy'' remaining component with cardinality zero
  between every pair of adjacent elements of $S$ in $T'$.

  Let $R_r$ be a remaining component containing at least as many vertices as any other remaining component.
  Note that $\abs{R_r} > 0$: it is never a ``dummy''.
  It suffices to show the upper bound holds for $R_r$. We root $T'$ at
  $R_r$. Since $k > 0$, there are strictly fewer than $n$ vertices among the
  remaining components $R$. Hence, by the Pigeonhole Principle, there must be
  some $s \in S$ such that the children $R_{s_1}, R_{s_2}, \dots,
  R_{s_{ch(s)}}$ of $s$ in $T'$ together contain fewer than $n/k$
  vertices. Let the parent of $s$ in $T'$ be $R_{p(s)}$. 
  See Figure~\ref{fig:bounding-php}.

  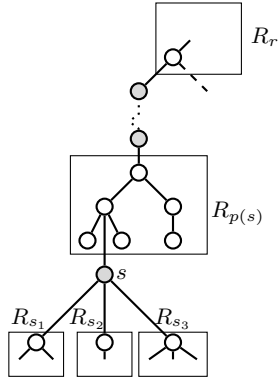
\begin{figure}
    \tikzstyle{vertex}=[circle,draw=black,thick,minimum size=6pt,inner sep=0pt]
    \tikzstyle{selected vertex} = [vertex, fill=red!24]
    \tikzstyle{edge} = [draw,thick,-]
    \tikzstyle{weight} = [font=\small]

    \begin{center}
      \begin{tikzpicture}[scale=0.45]
        \draw (3.5, 5.9) rectangle (6, 8);
        \draw (1, 0.6) rectangle (5, 3.5);
        \draw (-0.8, -1.7) rectangle (0.8, -3);
        \draw (1.2, -1.7) rectangle (2.8, -3);
        \draw (3.0, -1.7) rectangle (5.0, -3);

        \draw node[vertex] (r) at (4, 6.4) {};
        \draw node[vertex, fill=gray!30] (q) at (3, 5.4) {};
        \draw node[vertex, fill=gray!30] (p) at (3, 4) {};

        \draw node[vertex] (a) at (3, 3) {};
        \draw node[vertex] (b) at (2, 2) {};
        \draw node[vertex] (c) at (4, 2) {};
        \draw node[vertex] (d) at (1.5, 1) {};
        \draw node[vertex] (e) at (2.5, 1) {};
        \draw node[vertex] (f) at (4, 1) {};

        \draw node[vertex, fill=gray!30] (g) at (2, 0) {};
        \draw node[vertex] (i) at (0, -2) {};
        \draw node[vertex] (j) at (2, -2) {};
        \draw node[vertex] (k) at (4, -2) {};

        \foreach \source/ \dest in
        {a/b, a/c, b/d, b/e, c/f, b/g, g/i, g/j, g/k, p/a, r/q}
        \path[edge] (\source) -- node[weight] {} (\dest);
        \draw[edge, dotted, decorate, decoration={snake, amplitude=3pt, post length=0pt}] (p) -- node[weight] {} (q);
        \draw[edge] (i) -- node[weight] {} (-0.5, -2.5);
        \draw[edge] (i) -- node[weight] {} (0.5, -2.5);

        \draw[edge] (j) -- node[weight] {} (2, -2.5);

        \draw[edge] (k) -- node[weight] {} (3.3, -2.5);
        \draw[edge] (k) -- node[weight] {} (4, -2.5);
        \draw[edge] (k) -- node[weight] {} (4.7, -2.5);

        \draw[edge] (r) -- node[weight] {} (4.5, 6.9);
        \draw[edge, dashed] (r) -- node[weight] {} (5, 5.4);

        \node () at (6.7,7) {\footnotesize $R_r$};
        \node () at (5.9, 1.8) {\footnotesize $R_{p(s)}$};
        \node () at (2.5, 0) {\footnotesize $s$};
        \node () at (-0.2, -1.3) {\footnotesize $R_{s_1}$};
        \node () at (1.5, -1.3) {\footnotesize $R_{s_2}$};
        \node () at (4, -1.3) {\footnotesize $R_{s_3}$};

      \end{tikzpicture}

      \caption[Bounding the size of the largest remaining component]{Bounding
      the size of the largest remaining component in Theorem
    \ref{thm:largest-component-max-size}. In this case, the parent of $s$ is
  not the root and $s$ has $ch(s) = 3$ children in $T'$. The shaded vertices are
those in $S$.}
    \label{fig:bounding-php}
    \end{center}
  \end{figure}

  Since $S$ is optimal, $s$ must be an optimal choice of vertex to delete in an
  instance of \textsc{MinIGL} with graph $T - (S \setminus \{s\})$ and a budget of $1$
  deletion. In particular, it must also be the optimal choice of vertex to
  delete in the component containing $s$ in $T - (S \setminus \{s\})$. Hence, we
  may apply \autoref{largest-component-min-size} to $T - (S \setminus \{s\})$,
  in that component to give
  \begin{align*}
    \frac{n}{k} > \sum_{i=1}^{ch(s)} \abs{R_{s_i}}
    &\geq c (\abs{R_{p(s)}} + \abs{R_{s_1}} + \dots +
    \abs{R_{s_{ch(s)}}})^{1/4} \geq c \abs{R_{p(s)}}^{1/4},
  \end{align*}
  since $c > 0$, where $c$ is the constant in \autoref{largest-component-min-size}.  Thus, we have $\abs{R_{p(s)}} \leq c^{-4} \left( n/k \right)^4$.

  We now have two cases: if the parent $R_{p(s)}$ of $s$ in $T'$ is the root,
  $R_r$, or if it is not the root. If $R_{p(s)}$ is the root, then $p(s) = r$, so
  $\abs{R_{r}} \leq c^{-4} \left( n/k \right)^4$.
  Otherwise, $s$ is not a child of the root, and so $s$ must have been a more
  optimal choice than the best choice in the component induced by $R_r$ in $T -
  (S \setminus \{s\})$. Since this component contains $\abs{R_r}$ vertices, the
  best choice had utility at least $\frac{\abs{R_r}}{2}$, by
  Corollary~\ref{max-tree-utility-lb}. Now the paths that pass through $s$ in
  $T - (S \setminus \{s\})$ must have one endpoint in some $R_{s_j}$ and the other
  either in another $R_{s_j'}$ or in $R_{p(s)}$. This is the case since no path
  can have both endpoints in $R_{p(s)}$. Hence, there are at most $(n/k
  + 1)(n/k + \abs{R_{p(s)}})$ such pairs, accounting also for those paths
  starting at $s$. Since each of these paths have length at least 1, we have
  that
  \begin{align*}
    \frac{\abs{R_r}}{2} \leq \mathcal{U}_{T - (S \setminus \{s\})}(s)
    &\leq \left( \frac{n}{k} + 1 \right) \left( \frac{n}{k} + \abs{R_{p(s)}} \right) \leq 2\frac{n}{k} \left( \frac{n}{k} + \abs{R_{p(s)}} \right), 
    \intertext{because $k \leq n$. Thus}
    \frac{\abs{R_r}}{2}&\leq 2\frac{n}{k} \left( \frac{n}{k} + c^{-4} \left( \frac{n}{k} \right)^4  \right) 
    \leq 4c^{-4} \left( \frac{n}{k} \right)^5,
  \end{align*}
  because $0 < c < 1$. Hence, $\abs{R_r} \leq 8c^{-4} \left( n/k \right)^5$ and the result follows with a choice of $c'
  = 8c^{-4}$.
\end{proof}
\noindent
Thus, we can choose $c_{\text{IGL}} = c'$ and $t_{\text{IGL}} = 5$, showing that IGL is indeed balanced on trees. 
This gives \autoref{thm:fptstatement} and \autoref{thm:subexp-statement}.

\section{Computing the IGL}

\label{sec:compute-igl}


Computing the IGL of a graph is trivial once its distance distribution has been
determined. In this section, we describe algorithms for efficiently computing
the distance distribution of trees, and extend these ideas to graphs with small treewidth.



\subsection{Trees}

To compute the distance distribution on trees, we present a divide-and-conquer method (commonly known as the
\emph{centroid decomposition}, as used in \cite{Bilo2015}) as follows. We pick a vertex and compute the contribution to
the distance distribution of all paths passing through that vertex, using fast polynomial multiplication. Then, we
delete the vertex from the tree, and recurse on the remaining connected subtrees. We first provide a method that
efficiently computes this contribution.

\begin{lemma}
  \label{thm:compute-root-igl}
  Let $T = (V, E)$ be an unweighted tree with $n$ vertices and suppose $r \in
  V$. Then, the contribution to the distance distribution
  of all pairs in $\mathcal{P}^{-1}_T(r)$ can be found in $O(n \log n)$ time on a log-RAM.
\end{lemma}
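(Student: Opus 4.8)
The plan is to root $T$ at $r$ and reduce counting paths through $r$ to a polynomial multiplication over the multiset of distances from $r$ into each of its child subtrees. Let the children of $r$ be $c_1, \dots, c_{\text{ch}(r)}$, and let $T_1, \dots, T_{\text{ch}(r)}$ be the corresponding subtrees. A single BFS (or DFS) from $r$ computes $d_T(r, w)$ for every $w$, and simultaneously records which child subtree each $w$ lies in; this costs $O(n)$ time. For each subtree $T_j$, form the polynomial $P_j(x) = \sum_{w \in T_j} x^{d_T(r,w)}$, and also set $P_0(x) = x^0 = 1$ to represent $r$ itself. A pair $\{u,v\}$ lies in $\mathcal{P}^{-1}_T(r)$ precisely when $r \in \mathcal{P}_T(u,v)$, i.e. when $u$ and $v$ lie in distinct sets among $\{r\}, T_1, \dots, T_{\text{ch}(r)}$ (one of them may be $r$ itself), and in that case $d_T(u,v) = d_T(r,u) + d_T(r,v)$. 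Hence the generating function for these path lengths is $\sum_{0 \le i < j} [P_i P_j]$, which by the usual identity equals $\frac{1}{2}\left( \left(\sum_j P_j\right)^2 - \sum_j P_j^2 \right)$, where the sum runs over $j = 0, 1, \dots, \text{ch}(r)$. The coefficient of $x^i$ in this polynomial is exactly the number of pairs in $\mathcal{P}^{-1}_T(r)$ at distance $i$, so these coefficients are the desired contribution to the distance distribution.

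The remaining issue is the running time. Naively computing all the $P_j^2$ separately is too slow when $r$ has many small subtrees, since a squaring of a degree-$d$ polynomial via Lemma~\ref{lem:fast-mult} costs time proportional to $d$ (its coefficients are bounded by $n$, so the $\log m$ factor is $O(\log n)$), and the sum of the subtree sizes is only $n$ — so the total cost of the per-subtree squarings is $O(n \log n)$, which is fine. The potentially expensive piece is $\left(\sum_j P_j\right)^2$: but $\sum_j P_j = \sum_{w \in V} x^{d_T(r,w)}$ is a single polynomial of degree at most $n-1$ with non-negative integer coefficients summing to $n$, so its square has coefficients at most $n^2$, and by Lemma~\ref{lem:fast-mult} it can be squared in $O(n \log n^2) = O(n \log n)$ time on a log-RAM. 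Adding the $O(\text{ch}(r)) = O(n)$ polynomials together, and combining everything, costs $O(n \log n)$ overall. Reading off the non-zero coefficients as (distance, count) pairs is linear.

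I do not expect any serious obstacle here; the one point that needs care is making sure the bookkeeping with $r$ itself is handled — including $P_0 = 1$ in the family of polynomials automatically accounts for the pairs $\{r, w\}$, which indeed all have $r$ on their path. The other point to state explicitly is that we want the \emph{contribution} (i.e. we will later sum these contributions over the centroid decomposition recursion and must not double-count): each pair $\{u,v\}$ is counted here only for the unique recursion level at which the removed vertex lies on $\mathcal{P}_T(u,v)$ and both endpoints survive in that call, but that global argument belongs to the surrounding algorithm, not to this lemma — here we only assert that the per-vertex computation itself takes $O(n \log n)$ time.
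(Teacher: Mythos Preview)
Your proof is correct and follows essentially the same approach as the paper: root at $r$, form the distance polynomials $P_j$ for each child subtree (plus $P_0=1$ for $r$), and read the contribution off the coefficients of $\frac{1}{2}\big((\sum_j P_j)^2 - \sum_j P_j^2\big)$, using Lemma~\ref{lem:fast-mult} for the squarings. Your running-time justification is in fact slightly more explicit than the paper's, which simply notes that both the degree of $\sum_j P_j$ and the sum of the degrees of the $P_j$ are at most $n$.
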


\begin{proof}

  We begin by rooting the tree at $r$.  Suppose the children of $r$ are $s_1,
  \dots, s_{ch(r)}$ and let $S_1, \dots, S_{ch(r)}$ denote the set of vertices
  in the subtrees rooted at each child, respectively. With the addition of $S_0
  = \{r\}$, the sets $S_i$ form a partition of $V$.
  
  We perform a depth-first search from $r$, to find $d_T(r, u) = d_T(u, r)$ for
  each vertex $u$ and construct a sequence of distance polynomials $P_0, P_1,
  \dots, P_{ch(r)}$, where $P_i(x) = \sum_{w \in S_i} x^{d_T(r, w)}$. This
  takes $O(n)$ time, storing each distance polynomial in coefficient form: there are at most $n$ terms overall.
  Now let
  
  \[
  P(x) = \left( \sum_{0 \leq i \leq ch(r)} P_i(x) \right)^2 -
    \left( \sum_{0 \leq i \leq ch(r)} P_i^2(x) \right) 
      = \sum_{0 \leq j \leq n} b_j x^j.
  \]

  \noindent
  We observe that
  \begin{align}
    b_j = 2\abs{ \{ \{u, v\} \in \mathcal{P}^{-1}_T(r) : u \neq v \text{
    and } d_T(u, v) = j \} }, \label{square-sum-tree-igl}
  \end{align}
  that is, $b_j$ is twice the number of pairs of distinct vertices which have
  a path of length $j$ passing through $r$.
  Thus, the required contribution to the distance distribution can be read off from the coefficient form of $P(x)$.
  The result follows by computing this efficiently from the coefficients of each $P_i$ by applying
  \autoref{lem:fast-mult}, and observing
  that the degree of $\sum_{0 \leq i \leq ch(r)} P_i(x)$, and the sum of the degrees of the $P_i$'s are
  both at most $n$.
\end{proof}

If we always pick $r$ in \autoref{thm:compute-root-igl} to be a centroid of the
tree, Lemma~\ref{lem:centroid-half} ensures that each vertex can appear in at
most $\log_2 n + 1$ trees throughout the execution of our divide-and-conquer
algorithm. A centroid must always exist (also by
Lemma~\ref{lem:centroid-half}), and we can find one in linear time by
recursively computing, then examining, subtree sizes. This gives
Theorem~\ref{thm:dist-trees}.

\thmdisttrees*

If we only wish to determine the first $p$ values of the distance distribution
of $T$, we can modify \autoref{thm:compute-root-igl} to run in $O(n + p
\log n)$ time, by discarding all terms with degree greater than $p$ when constructing the
polynomials. Thus, the expensive multiplication step costs $O(p \log n)$ time
by Lemma~\ref{lem:fast-mult}, and we obtain \autoref{thm:tree-prefix} as a corollary.

\begin{restatable}{theorem}{thmtreeprefix}
  \label{thm:tree-prefix}
  The prefix $a_1, \dots, a_p$ of the distance distribution of a tree with $n$
  vertices can be computed in $O(n \log n + p \log^2 n)$ time on a log-RAM.
\end{restatable}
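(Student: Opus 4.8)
The plan is to keep the centroid-decomposition recursion behind \autoref{thm:dist-trees} intact, and only replace its per-centroid subroutine, \autoref{thm:compute-root-igl}, by a \emph{truncated} variant that records the contribution of pairs at distance at most $p$ and is cheap when $p$ is small. The observation that makes truncation safe is that if $r\in\mathcal{P}_T(u,v)$ and $d_T(u,v)\le p$, then both $d_T(r,u)\le p$ and $d_T(r,v)\le p$; hence every pair that is counted at $r$ and is relevant to the prefix $a_1,\dots,a_p$ has both endpoints within distance $p$ of $r$, so the only part of the distance polynomials of \autoref{thm:compute-root-igl} that can matter is their restriction to degrees at most $p$.

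Concretely, I would first prove the truncated analogue of \autoref{thm:compute-root-igl}: for a tree on $n'$ vertices rooted at $r$, the contribution of the pairs of $\mathcal{P}^{-1}$ through $r$ to $a_1,\dots,a_p$ is computable in $O(n'+p\log n)$ time. Rooting and the depth-first search that computes the distances from $r$ still cost $O(n')$. When forming $P_0,\dots,P_{ch(r)}$, discard every monomial of degree greater than $p$; each $P_i$ then has degree at most $\min(|S_i|,p)\le p$, so $\sum_i P_i$ has degree at most $\min(n'-1,p)\le p$, and $\sum_i\deg P_i\le n'-1$. Since every product coefficient we read off is at most $2\binom{n'}{2}\le n^2$, \autoref{lem:fast-mult} computes the expensive squaring $\bigl(\sum_i P_i\bigr)^2\bmod x^{p+1}$ in $O(p\log n)$ time, while the terms $P_i^2\bmod x^{p+1}$ are handled by summing the individual truncated squarings, whose degrees total at most $n'-1$. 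Forming $P\bmod x^{p+1}$ and reading off the relevant coefficients then finishes exactly as in \autoref{thm:compute-root-igl}.

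Finally I would run this routine inside the centroid recursion, always choosing $r$ to be a centroid and adding up the reported contributions. Correctness is inherited from \autoref{thm:dist-trees}: the path of any pair passes through exactly one centroid that is deleted before either of its endpoints, so each pair at distance at most $p$ is counted once. For the running time, \autoref{lem:centroid-half} gives depth $O(\log n)$ and that each vertex lies in at most $\log_2 n+1$ of the subtrees, so the $O(n')$-type work (rooting, DFS, and the small-degree squarings) sums to $O(n\log n)$ over the whole recursion, and the truncated squarings, whose FFTs all have size $O(p)$ rather than $O(n)$, contribute $O(p\log n)$ per level and hence $O(p\log^2 n)$ in total. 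I expect this last tally to be the main obstacle: unlike the $O(n\log^2 n)$ bound for the full distribution, one has to argue carefully that truncation genuinely pins every costly multiplication to size $O(p)$ so that the overhead beyond the linear-per-level work does not scale with $n$, charging the squarings on subtrees of fewer than $p$ vertices (which already cost only $O(n_r\log n)$) to the $O(n\log n)$ budget.
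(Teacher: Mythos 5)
Your approach (truncate the centroid-distance polynomials at degree $p$, run the centroid recursion, add contributions) is the same one the paper sketches, and you are right to flag the "tally" as the place where care is needed. But the tally as you resolve it does not go through. The claim that "the truncated squarings ... contribute $O(p\log n)$ per level and hence $O(p\log^2 n)$ in total" implicitly assumes one $O(p)$-sized squaring per level of the recursion; in fact a single level of the centroid decomposition can contain up to $\Theta(n/p)$ disjoint subtrees each of size at least $p$, and each of them triggers its own degree-$\Theta(p)$ squaring with cost $\Theta(p\log n)$ by \autoref{lem:fast-mult}. So the per-level cost of those squarings can be $\Theta(n\log n)$, not $O(p\log n)$, and the $O(p\log^2 n)$ total for the truncated squarings does not follow.

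The other half of the charge, that the squarings on subtrees of fewer than $p$ vertices are "$O(n_r\log n)$" and hence absorbed into the $O(n\log n)$ budget, also fails: a vertex lies in $\Theta(\log p)$ such small subtrees (the tail of its root-to-leaf path in the recursion tree once sizes drop below $p$), so $\sum_{\text{small}} n_r = \Theta(n\log p)$, and multiplying by the $\log n$ factor from \autoref{lem:fast-mult} gives $\Theta(n\log p\log n)$, which is not $O(n\log n)$ once $p$ grows past a constant. Concretely, on a path with $p=\sqrt n$ the total multiplication work in your algorithm is $\Theta(n\log^2 n)$, whereas the stated bound would be $O(n\log n)$. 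You would need a genuinely different idea — not just "truncate and recurse" — to push the multiplication cost below $\Theta(n\log n\cdot\min(\log n,\log p))$; the brief argument as written cannot be made to certify $O(n\log n+p\log^2 n)$.
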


\subsection{Graphs with small treewidth}

Here, we extend the ideas used in the previous section to prove Theorem~\ref{thm:dist-pref}.

Let $G = (V, E)$ be
an undirected graph with $n$ vertices, whose edges each have a non-negative
weight.
We describe a modification of the method of Cabello and Knauer \cite{Cabello2009}, to recursively reduce the task of computing
the distance distribution of $G$ to solving instances of \textsc{RedBluePolynomial} over points in $O(\text{tw}(G))$
dimensions, with values at most $p$.

In time $2^{O(k)} n$, we can compute a tree decomposition of $G$ of
width at most $k = 5 \cdot \text{tw}(G) + 4$ containing at most $O(kn)$ nodes
\cite{Bodlaender2016}. Using a common technique, we can transform this
decomposition into a \emph{nice} tree decomposition with $N = O(kn)$ nodes
(see, for example \cite{Cygan2015}).
The nodes of a nice tree decomposition form a rooted binary tree.

Let $A$ be a subset of vertices. A \emph{portal} of $A$ is a vertex in $A$ which has, as a
neighbour, some vertex outside $A$. If these portals are contained in some set
$S \subseteq A$, we can partition the vertices of the graph into three sets: $A \setminus S, S$ and $V
\setminus A$, such that every path from a vertex in $A$ to a vertex in $V
\setminus A$ passes through some vertex in $S$. 


Since the nice tree decomposition is a binary tree, there is some edge
$ij$ in the decomposition whose removal splits the
decomposition's tree into two components $I$ and $J$ (containing nodes $i$ and $j$,
respectively), each containing between $\frac{N}{3}$ and $\frac{2N}{3}$
nodes. Let $A$ be the set of vertices that appear in component $I$, and let $S$
be the intersection $B_i \cap B_j$ of the bags corresponding to nodes $i$ and
$j$.  Necessarily, $S$ must contain all the portals of $A$ due to properties of
the tree decomposition.  Moreover, $\abs{B_i \cap B_j} \leq \min( \abs{B_i},
\abs{B_j}) \leq k+1$.

Given this fixed $A$, recursively find the distance distribution among all
pairs of vertices in $A$ as follows. First, perform Dijkstra's algorithm from
all vertices in $S$.  For every pair of vertices in $S$, add an edge whose
weight equals the length of the shortest path between them. After these edges
are added, the length of the shortest path in $G$ between any pair of vertices
in $A$ can be found by only considering paths passing through the vertices of
$A$. Hence, we remove all vertices in $V \setminus A$, and recurse on this
smaller graph. Note that $I$ is a valid tree decomposition for this new graph,
since $i \in I$, and all the added edges have their endpoints in $B_i$. Thus,
we do not need to find another tree decomposition for this new graph, and its
treewidth does not exceed $k$.

In the same way, we recursively find the distance distribution induced by the
pairs of vertices in $(V \setminus A) \dot\cup S$. Between these two sets of
pairs, we have counted pairs of vertices in $S$ twice, so we subtract the
distance distribution induced by these pairs using the shortest paths already
computed.

Finally, we must compute the distance distribution among shortest paths between
the remaining pairs of vertices: these are the pairs in $(A \setminus S) \times
(V \setminus A)$. Let the vertices in $S$ be $s_1, \dots, s_{\abs{S}}$. For
every $a$ in $A \setminus S$ and every $b$ in $V \setminus A$, we associate
$(a, b)$ with precisely one vertex in $s$ through which some shortest path
between the vertices passes. More formally, we will associate $(a, b)$ with the
only $s_i$ such that
\begin{align*}
  d_G(a, s_i) + d_G(s_i, b) &< d_G(a, s_j) + d_G(s_j, b) && \text{for all $j < i$, and} \\
  d_G(a, s_i) + d_G(s_i, b) &\leq d_G(a, s_j) + d_G(s_j, b) && \text{for all $j > i$.}
  \intertext{By rearranging, and observing that all distances are integers, we deduce that this is precisely when}
  d_G(a, s_i) - d_G(a, s_j) &< d_G(s_j, b) - d_G(s_i, b) && \text{for all $j < i$, and} \\
  d_G(a, s_i) - d_G(a, s_j) &< d_G(s_j, b) - d_G(s_i, b) + 1 && \text{for all $j > i$.}
\end{align*}
\noindent
Note that all these distances are known from our application of Dijkstra's
algorithm from each vertex in $S$. Since any path from $a$ to $b$ must pass
through $S$, it follows from these inequalities that $d_G(a, s_i) + d_G(s_i, b)
= d_G(a, b)$. For each vertex $s_i$ in turn, we will compute the contribution
of all pairs of vertices associated with $s_i$ to the distance distribution. We
do so by reducing this task to an instance of \textsc{RedBluePolynomial}.

Our instance will have points in $\abs{S}-1$ dimensions: one dimension for each $j \neq i$. For each $a \in A
\setminus S$, create a red point with coordinate $d_G(a, s_i) - d_G(a, s_j)$
in the dimension corresponding to $j$, and value $d_G(a, s_i)$, corresponding to each $s_j \neq
s_i$. Similarly, for each $b \in V \setminus A$, create a blue point with
coordinate $d_G(s_j, b) - d_G(s_i, b)$, for each $j < i$, and $d_G(s_j, b) -
d_G(s_i, b) + 1$ for each $j > i$, with value $d_G(s_i, b)$. Importantly, we
omit any points with value greater than $p$: these cannot contribute to the
prefix we are trying to compute. Hence, we have created no more than $n$ points
in all, each with a non-negative integer value no greater than $p$. The
coefficient of $x^l$ produced by our instance of \textsc{RedBluePolynomial}
corresponds to the number of pairs associated with $s_i$ that are distance $l$
apart.
This concludes the description of our reduction.

Naturally, we now turn our attention to solving \textsc{RedBluePolynomial}.
Naively, this can be done in quadratic time by considering every pair of
points. However, when values are bounded -- such as in our instance -- we can
solve the problem more efficiently. 

For a given instance of \textsc{RedBluePolynomial}, let $n = r + b$ be the
total number of points and suppose the value of each point does not exceed some
integer $v \geq 0$. Below, we give solutions with time complexity parameterized
by both $n$ and $v$. We consider the 1-dimensional case, then
extend this result to higher dimensions recursively.

\begin{restatable}{lemma}{lemsqrt}
  \label{bw-one-d}

  \sloppy
  When $d = 1$, there is an algorithm that solves \textsc{RedBluePolynomial}
  in $O(n \sqrt{v \log n} + n \log n)$ time on a log-RAM.
\end{restatable}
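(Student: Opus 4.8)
The plan is to combine square-root decomposition with \autoref{lem:fast-mult}. First observe that in one dimension $R_p < B_q$ is simply a comparison of two reals, so after sorting all $n = r + b$ points by coordinate (breaking ties so that blue points precede red points, which makes ``seen so far'' coincide with ``strictly below''), the red points dominated by a blue point $B_q$ form a prefix of the sorted list of red points. If $\rho_q(x) = \sum_{p:\,R_p < B_q} x^{r_p}$ denotes the polynomial of that prefix (of degree at most $v$), the output is $\sum_q x^{b_q}\rho_q(x)$, and I would accumulate this into an array of size $2v+1$ and then read off its non-zero entries as the required list of (exponent, coefficient) pairs.

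Next I would split the sorted red points into $\lceil r/s\rceil$ consecutive blocks $G_1,\dots,G_{\lceil r/s\rceil}$ of size at most $s$ (for a parameter $s$ fixed later), and precompute the cumulative polynomials $Q_i(x) = \sum_{p\in G_1\cup\cdots\cup G_i} x^{r_p}$ in $O(r + (r/s)v)$ time. Sweeping the points left to right while maintaining the number $m$ of red points seen so far, when I reach a blue point $B_q$ I have $m_q = m$, so its dominated prefix is exactly the first $i_q = \lfloor m_q/s\rfloor$ blocks plus the first $k_q = m_q - s i_q < s$ red points of block $G_{i_q+1}$; thus $\rho_q = Q_{i_q} + (\text{partial block})$. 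Accordingly I split the answer into a full-block part $\sum_q x^{b_q} Q_{i_q}(x)$ and a partial-block part. For the partial-block part, during the sweep I enumerate for each $B_q$ the at most $k_q \le s$ red points of its partial block (indexing into the sorted red array) and add $1$ to the coefficient of $x^{b_q + r_p}$ for each; this costs $O(bs)$ in total. For the full-block part I bucket the blue points by the value $i_q$: with $D_i(x) = \sum_{q:\,i_q = i} x^{b_q}$ (of degree at most $v$), the full-block part equals $\sum_i D_i(x) Q_i(x)$, a sum of at most $\lceil r/s\rceil + 1$ products of polynomials of degree at most $v$ whose product coefficients count pairs and hence are bounded by $rb \le n^2$; by \autoref{lem:fast-mult} each product takes $O(v\log n)$ time, for $O((r/s)v\log n)$ in total.

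Collecting the costs, the running time is $O\bigl(n\log n + (r/s)v\log n + bs\bigr)$. Choosing $s$ of order $\sqrt{rv\log n/b}$ (and $s \ge 1$; the degenerate cases $b = 0$ or $r = 0$ give an empty output) balances the last two terms at $O(\sqrt{rbv\log n})$, and since $\sqrt{rb} \le (r+b)/2 = n/2$ this is $O(n\sqrt{v\log n})$, giving the stated bound $O(n\sqrt{v\log n} + n\log n)$.

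The main obstacle here is the bookkeeping of the decomposition rather than any single deep step: I must ensure the partial block of each blue point is located with $O(1)$ arithmetic plus $O(s)$ enumeration (by keeping the red points in sorted order together with the running count $m$), that clamping $s$ (to $1$ from below, and effectively to $r$ from above when $v\log n \ge rb$) keeps every term within the claimed bound, and that all coefficients that arise (in the $Q_i$, the $D_i$, and their products) are $O(\log n)$-bit non-negative integers, so that \autoref{lem:fast-mult} applies and all array operations run in $O(1)$ time on a log-RAM. The strict-versus-non-strict domination is absorbed entirely into the tie-breaking rule of the initial sort.
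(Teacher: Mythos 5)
Your proof is correct and takes essentially the same approach as the paper: sort by coordinate with blue-before-red tie-breaking, decompose into $O(n/\text{block size})$ blocks, handle the ``within-block'' residuals by brute force and the ``across-block'' prefix contributions with one application of \autoref{lem:fast-mult} per block, then balance the two costs at $O(n\sqrt{v\log n})$. The paper blocks the interleaved red and blue points together (so each group contributes a product $\bigl(\sum_{R_p \in G_1\cup\dots\cup G_{i-1}} x^{r_p}\bigr)\bigl(\sum_{B_q \in G_i} x^{b_q}\bigr)$ plus an $O(t^2)$ naive within-group pass), while you block only the red points and bucket blues by their prefix-block index, but this is a cosmetic reparametrisation of the same square-root decomposition and yields the identical bound.
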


\begin{proof}

  Sort the red and blue points together in non-decreasing order of the coordinate, placing blue points earlier in the order when there are
  ties. Let $t$ be a positive integer no greater than $n$. Assign points to
  groups of size no more than $t$ by placing the first $t$ points, in order,
  into a group $G_1$, followed by the next $t$ points in order into a group
  $G_2$, and so on, so we create $\ceil{\frac{n}{t}}$ groups in all. An example
  is given in Figure~\ref{fig:sqrt-decomp}.

  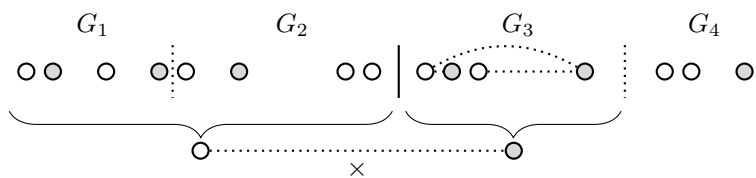
\begin{figure}
    \tikzstyle{vertex}=[circle,draw=black,thick,minimum size=6pt,inner sep=0pt]
    \tikzstyle{selected vertex} = [vertex, fill=red!24]
    \tikzstyle{edge} = [draw,thick,-]
    \tikzstyle{weight} = [font=\small]

    \begin{center}
      \begin{tikzpicture}[node distance=1cm, scale=0.35]
        \usetikzlibrary{positioning}

        \draw node[vertex] (1) at (-10, 0) {};
        \draw node[vertex, fill=gray!30] (2) at (-8, 0) {};
        \draw node[vertex] (3) at (-7, 0) {};
        \draw node[vertex, fill=gray!30] (4) at (-5, 0) {};
        \draw node[vertex] (5) at (-1, 0) {};
        \draw node[vertex] (6) at (0, 0) {};
        \draw node[vertex] (7) at (2, 0) {};
        \draw node[vertex, fill=gray!30] (8) at (3, 0) {};
        \draw node[vertex] (9) at (4, 0) {};
        \draw node[vertex, fill=gray!30] (10) at (8, 0) {};
        \draw node[vertex] (11) at (11, 0) {};
        \draw node[vertex] (12) at (12, 0) {};
        \draw node[vertex, fill=gray!30] (13) at (14, 0) {};
        \draw node[vertex, fill=gray!30] (14) at (-12, 0) {};
        \draw node[vertex] (15) at (-13, 0) {};

        \draw[edge, dotted] (-7.5, 1) -- node[] {} (-7.5, -1);
        \draw[edge] (1, 1) -- node[] {} (1, -1);
        \draw[edge, dotted] (9.5, 1) -- node[] {} (9.5, -1);

        \draw[edge, dotted] (9) -- node[] {} (10);
        \draw[edge, dotted, bend left=30] (7) edge node[] {} (10);
        \draw[edge, dotted] (7) edge node[] {} (8);




        \draw
          [decorate,decoration={brace,amplitude=10pt},xshift=-4pt,yshift=0pt]
          (0.9,-1.5) -- (-13.5,-1.5) node [black,midway,xshift=-0.6cm] {};

        \draw
          [decorate,decoration={brace,amplitude=10pt},xshift=-4pt,yshift=0pt]
          (9.5, -1.5) -- (1.4,-1.5) node [black,midway,xshift=-0.6cm] {};

        \draw node[vertex] (rw) at (-6.45, -3) {};
        \draw node[vertex, fill=gray!30] (rb) at (5.325, -3) {};
        \draw[edge, dotted] (rw) -- node[below] {$\times$} (rb);

        \draw node[] () at (-10.5, 1.75) {$G_1$};
        \draw node[] () at (-3, 1.75) {$G_2$};
        \draw node[] () at (5.5, 1.75) {$G_3$};
        \draw node[] () at (12.5, 1.75) {$G_4$};
      \end{tikzpicture}
      \caption{Square root decomposition in Lemma~\ref{bw-one-d}. The empty
        dots represent red points, and the shaded dots represent blue points.
        When processing $G_3$, we consider pairs of points within the group
        where the red point precedes the blue point. We then consider
      cross-group pairs whose blue point is in $G_3$ using fast polynomial
    multiplication.}
    \label{fig:sqrt-decomp}
    \end{center}
  \end{figure}

  We will separately consider pairs of points that both belong to the same
  group, and those that belong to different groups. In each group, consider
  every pair of points, and check if they contribute a term to the polynomial.
  This takes $O(nt)$ time over all groups.
  
  It remains to consider pairs that belong to different groups: call these
  \emph{cross-group} pairs. For each blue point in $G_i$, we must add an extra
  term for each red point among $G_1, \dots, G_{i-1}$.  Thus, the total
  cross-group contribution of all pairs with a blue point in $G_i$ can be
  written as the following product of two polynomials.
  \[
    \sum_{B_q \in G_i} \; \sum_{R_p \in G_1 \cup \dots \cup G_{i-1}} x^{r_p + b_q}
      =
      \left(\sum_{R_p \in G_1 \cup \dots \cup G_{i-1}} x^{r_p} \right)
      \left( \sum_{B_q \in G_i} x^{b_q} \right)
  \]
  To compute these contributions, iterate over each group in order, maintaining
  the coefficient form of the polynomial representing all red points in groups
  processed thus far. This corresponds to the first multiplicand on the right
  hand side.  We can quickly construct the second multiplicand directly from
  the elements in this group. Note that the degree of both multiplicands does
  not exceed $v$, and that the coefficients of the product do not exceed $n^2$.
  Hence, we can compute the product of these two polynomials in $O(v \log n)$
  time by Lemma~\ref{lem:fast-mult}, so we can compute the cross-group
  contributions in $O(\frac{n}{t} v \log n)$ time.

  Combining these parts with an appropriate choice of $t$ gives the required result.
\end{proof}

\begin{theorem}
  \label{bw-multi-d}

  \sloppy
  There is an algorithm that solves \textsc{RedBluePolynomial} in $2^{O(d)}
  n^{1+\varepsilon} \sqrt{v}$ time on a log-RAM, for every $\varepsilon > 0$.
\end{theorem}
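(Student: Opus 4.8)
The plan is to apply Bentley's multidimensional divide-and-conquer, using \autoref{bw-one-d} as the base case $d=1$. For $d \geq 2$, let $m$ be the median of the $d$-th coordinates over all $n = r+b$ points, and split the points three ways: $L$ (those whose $d$-th coordinate is strictly less than $m$), $M$ (equal to $m$), and $H$ (strictly greater than $m$), so that $\abs{L}, \abs{H} \leq \ceil{n/2}$. Since a dominating pair $R_p < B_q$ needs a strict inequality in every coordinate, no dominating pair lies entirely in $M$; any dominating pair lying entirely in $L$ (respectively $H$) is recovered by recursing on $L$ (respectively $H$) in dimension $d$; and every remaining dominating pair has its red point in $L$ and its blue point in $M \cup H$, or its red point in $M$ and its blue point in $H$. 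In both of these remaining cases the red point's $d$-th coordinate is automatically strictly below the blue point's, so such pairs are precisely the dominating pairs of the $(d-1)$-dimensional instance obtained by deleting the $d$-th coordinate and restricting to the red points of the first set and the blue points of the second (values unchanged). Adding the polynomials returned by these (at most four) recursive calls counts each dominating pair exactly once, so the sum is the required output.

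All polynomials are kept in the sparse form the problem demands — a list of (exponent, coefficient) pairs sorted by exponent — so merging the polynomials at a node handling $m$ points costs time linear in the number of distinct exponents present. Every exponent is at most $2v$ and there are at most $m^2$ pairs, so this number is at most $\min(2v+1, m^2) \leq m\sqrt{2v+1}$, hence a merge costs $O(m\sqrt{v})$. Computing the median, performing the three-way split, and projecting out a coordinate cost a further $O(m)$ time by linear-time selection (or $O(m\log m)$ if one simply sorts). Thus the non-recursive work at an internal node on $m$ points is $O(m\sqrt{v}+m\log m)$, and by \autoref{bw-one-d} the work at a base-case node on $m$ points is $O(m\sqrt{v\log m}+m\log m)$; in both cases this is $O\!\left(m(\sqrt{v\log n}+\log n)\right)$.

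It remains to bound $\sum m$ over all nodes of the recursion. The two within-calls are on at most $\ceil{n/2}$ points each, and each of the two cross-calls is on at most $n$ points in dimension $d-1$, so writing $g(n,d)$ for this sum we get $g(n,d) \leq 2g(\ceil{n/2},d) + 2g(n,d-1) + O(n)$ with $g(n,1) = O(n)$. This is a recurrence of the type analysed by Monier \cite{Monier1980} for Bentley's scheme (the extra factor $2$ on the cross term only inflates the bound by $2^{O(d)}$), whose solution is $2^{O(d)} n B(n,d)$ with $B(n,d) = \binom{d+\ceil{\log n}}{d}$. Bringmann et al.\ \cite{Bringmann2018} showed $B(n,d) = 2^{O(d)} n^{\varepsilon}$ for every $\varepsilon > 0$, so $g(n,d) = 2^{O(d)} n^{1+\varepsilon}$; multiplying by the per-point cost $O(\sqrt{v\log n}+\log n)$ and absorbing the poly-logarithmic-in-$n$ factors into $n^{\varepsilon}$ gives the claimed $2^{O(d)} n^{1+\varepsilon}\sqrt{v}$ bound. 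I expect the running-time analysis to be the main obstacle: one must make the divide step resolve the $d$-th coordinate cleanly despite ties (handled by the three-way split), keep every polynomial sparse so that a merge costs $O(m\sqrt v)$ rather than $O(mv)$, and invoke the nontrivial estimate $B(n,d) = 2^{O(d)} n^{\varepsilon}$ to trade the poly-logarithmic-per-dimension overhead of the naive analysis for the stated $2^{O(d)} n^{\varepsilon}$ factor.
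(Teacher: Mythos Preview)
Your proof is correct and follows the same overall strategy as the paper: Bentley's multidimensional divide-and-conquer, the one-dimensional base case of \autoref{bw-one-d}, and the Monier/Bringmann et al.\ bound $B(n,d)=2^{O(d)}n^{\varepsilon}$ to resolve the recurrence. The paper handles ties differently, using a two-way split in which blue points at the median are pushed into the lower half until it has exactly $n/2$ points; this yields a single cross-call $T(n,d-1)$ rather than your two, but as you note the extra factor is absorbed into $2^{O(d)}$. Your treatment of the polynomial-merge cost via $\min(2v{+}1,m^2)\le m\sqrt{2v{+}1}$ is in fact more careful than the paper's: the paper writes the non-recursive work as $O(n)$, which is only literally correct when $v=O(n)$ (as holds in the intended application with $v=p\le n$), whereas your sparse-merge bound makes the $2^{O(d)}n^{1+\varepsilon}\sqrt{v}$ running time valid for arbitrary $v$.
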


\begin{proof}
  When $d = 1$, we use Lemma~\ref{bw-one-d}. Otherwise, we will use the divide-and-conquer method
  of Bentley \cite{Bentley1980} to reduce the problem to smaller dimensions.

  First, combine the red and blue points into one list and apply
  divide-and-conquer as follows. Let $x_m$ be the median value among the first
  coordinate of all points. This can be found in $O(n)$ time \cite{Blum1973}. We
  divide the list into two halves as follows. First assign those points with
  first coordinate less than $x_m$ into the first half, and those with first
  coordinate greater than $x_m$ into the second half. Among those with first
  coordinate precisely $x_m$, assign blue points to the first half until the
  first half has $\frac{n}{2}$ points. Assign the remaining points to the
  second half. This assignment can be done in $O(n)$ time and has the property
  that if $R_p < B_q$, then either both points belong to the same half, or they
  belong to the first and second half, respectively.
  
  Next, recursively compute the contribution of both groups to the final
  polynomial.  The remaining pairs that may contribute terms to the result must
  have a red point in the first half, and a blue point in the second half.
  Since the ordering guarantees that all points in the first half have a first
  coordinate no greater than those in the second half, we project the red
  points in the first half together with the blue points in the second half
  onto a $(d-1)$-dimensional space by simply ignoring the first coordinate of
  each point.  We then solve \textsc{RedBluePolynomial} for this set of points
  in $d-1$ dimensions recursively. 

  Let $T(n, d)$ be the time complexity of this algorithm. Then, $T(n, d)$ satisfies the recurrence:
  \begin{align*}
    T(n, d) &= 2 T(n/2, d) + T(n, d-1) + O(n) \\
    T(n, 1) &= O(n \sqrt{v \log n} + n \log n) \\
    T(1, d) &= O(1).
  \end{align*}
  Bringmann et al. \cite{Bringmann2018}, using the results of Monier
  \cite{Monier1980}, showed that the recurrence
  \begin{align*}
    Q(n, d) &= 2 Q(n/2, d) + Q(n, d-1) + O(n) \\
    Q(n, 2) &= O(n \log n) \\
    Q(1, d) &= O(1)
  \end{align*}
  satisfies $Q(n, d) = 2^{O(d)} n^{1 + \varepsilon}$, for every $\varepsilon >
  0$.
  
  Observing that $T(n, d) = O( Q(n, d+1) \sqrt{v \log n} )$ and that
  $\sqrt{\log n} = O(n^{\varepsilon^{\ast}})$ for any $\varepsilon^{\ast} > 0$,
  completes the proof.
\end{proof}
\noindent

An analysis of the algorithm we have described in this section gives \autoref{thm:dist-pref}.

\thmdistpref*

\begin{proof}

  To find the contribution of pairs in
  $(A \setminus S) \times (V \setminus A)$, we solve $\abs{S} \leq k + 1$ instances of
  \textsc{RedBluePolynomial} in $\abs{S} - 1 \leq k$ dimensions, using the result
  of Theorem~\ref{bw-multi-d}. As our algorithm performs divide-and-conquer over
  the nodes of the tree decomposition, each vertex induces the creation of a
  point in $O((k+1) \log (kn)) = O(k \log n)$ instances of \textsc{RedBluePolynomial}.
  Hence, since the time complexity of Theorem~\ref{bw-multi-d} is superadditive
  with respect to $n$, the total running time over all instances of
  \textsc{RedBluePolynomial} is $2^{O(k)} n^{1 + \varepsilon} \log n \sqrt{p} =
  2^{O(k)} n^{1 + \varepsilon'} \sqrt{p}$ for any $\varepsilon' > 0$.

  Since we are working on a (nice) tree decomposition with $O(kn)$ nodes, the
  running time of finding an appropriate dividing edge in the tree, and
  performing $k$ Dijkstra's per instance are negligible compared to that of
  solving our instances of \textsc{RedBluePolynomial}.
  The result follows from the fact that $k = O(\text{tw}(G))$.
\end{proof}

This result can easily be extended to directed graphs, and
graphs with bounded edge weights, with some modifications, and a suitable choice of $p$.
On graphs with unit weight edges, setting $p = n-1$ determines the entire distance distribution.

\begin{corollary}
  The distance distribution of an undirected graph $G$ with $n$ vertices, edges of
  unit weight and treewidth $\text{tw}(G)$ can be computed in $2^{O(\text{tw}(G))} n^{3/2 +
  \varepsilon} $ time on a log-RAM.
\end{corollary}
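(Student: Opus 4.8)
The plan is to obtain this as an immediate corollary of \autoref{thm:dist-pref} by choosing the prefix length $p$ large enough to capture the entire distance distribution. First I would observe that, because every edge of $G$ has unit weight, the distance between any two vertices lying in the same connected component is an integer between $1$ and $n-1$, while vertices in distinct components are at distance $\infty$. Hence the distance distribution $\{a_i\}$ satisfies $a_i = 0$ for every $i \geq n$, so the prefix $a_1, \dots, a_{n-1}$ already records every nonzero entry of the distribution. If one additionally wants the number of pairs at distance $\infty$, it is recovered as $\binom{n}{2} - \sum_{i=1}^{n-1} a_i$, so no information is lost by restricting to this prefix.

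Next I would invoke \autoref{thm:dist-pref} with $p = n-1$: it computes $a_1, \dots, a_{n-1}$ on a log-RAM in time $2^{O(\text{tw}(G))}\, n^{1+\varepsilon}\sqrt{n-1}$ for any $\varepsilon > 0$. Since $\sqrt{n-1} = O(n^{1/2})$ and the exponent $\varepsilon$ may be chosen as small as desired, this running time is $2^{O(\text{tw}(G))}\, n^{3/2+\varepsilon}$, matching the claimed bound.

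There is essentially no obstacle here beyond the bookkeeping above: all the work is already done inside \autoref{thm:dist-pref} (which in turn rests on the reduction to \textsc{RedBluePolynomial} and Theorem~\ref{bw-multi-d}), and the corollary merely records the consequence of the trivial bound of $n-1$ on the maximum finite distance in a unit-weight graph. The same argument, taking $p$ to be the largest finite pairwise distance, yields the analogous statements for directed graphs and for graphs with bounded integer edge weights alluded to in the preceding remark.
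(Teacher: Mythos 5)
Your proposal is correct and is exactly the argument the paper intends: the paper itself remarks just before the corollary that ``On graphs with unit weight edges, setting $p = n-1$ determines the entire distance distribution,'' and the corollary then follows from Theorem~\ref{thm:dist-pref} with $p = n-1$ as you describe. Nothing further is needed.
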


\section{Conclusion}

We have provided a general method to solve \textsc{MinH} on trees in
subexponential time and polynomial space, whenever $H$ is additive, balanced on trees, and computable in polynomial
time.
We used this to give a $2^{O((n \log n)^{5/6})}$ time, polynomial space algorithm for \textsc{MinIGL}, by proving that IGL is balanced on trees.
Our proof ideas can be used to show that other measures (such as the Wiener index), are also balanced on trees.


For graphs with treewidth $k$, we have shown that in $2^{O(k)} n^{3/2
+\varepsilon} $ time, one can compute the entire distance distribution of the
input graph. Compared to the $O(kn^2)$ time algorithm for computing APSP
\cite{Planken2012}, our
dependence on $n$ is a factor of $O(\sqrt{n})$ less, though our dependence on
$k$ is exponential. Our algorithm is a $O(\sqrt{n})$ factor slower than the
current best-known $2^{O(k)} n^{1 + \varepsilon}$ time algorithm for diameter
\cite{Abboud2016}. For graphs with diameter $O(n^{\varepsilon'})$ for all $\varepsilon' > 0$, including graphs with
polylogarithmic diameter, the extra
factor becomes $O(n^\varepsilon)$ for any $\varepsilon > 0$, when compared to
the current best-known $2^{O(k)} n$ time algorithm for diameter \cite{Husfeldt2017} in this setting. This might be expected, as the distance
distribution implies the diameter, and is implied by the APSP, but we find it
somewhat surprising that the distance distribution can be computed faster than APSP on graphs with small treewidth.

Our results can be immediately applied to compute any measure of a graph that
is a function of the distance distribution. However, they are
difficult to adapt to measures that compute properties of individual vertices in
the graph, as we exploit properties exclusive to counting pairs that are
certain distances apart, without expressly considering which vertices belong to
such pairs. In particular, this means that our results are unlikely to directly
provide further insight into the efficient computation of related measures, such
as the task of computing closeness centrality \cite{Bavelas1950a} of every
vertex in a given graph.
\bibliography{references}

\begin{thebibliography}{10}

\bibitem{Abboud2016}
Amir Abboud, Virginia~Vassilevska Williams, and Joshua~R. Wang.
\newblock Approximation and fixed parameter subquadratic algorithms for radius
  and diameter in sparse graphs.
\newblock In {\em Proceedings of the 27th Annual {ACM-SIAM} Symposium on
  Discrete Algorithms ({SODA} 2016)}, pages 377--391. {SIAM}, 2016.
\newblock \href {http://dx.doi.org/10.1137/1.9781611974331.ch28}
  {\path{doi:10.1137/1.9781611974331.ch28}}.

\bibitem{Atkins2009}
Karla Atkins, Jiangzhuo Chen, V.~S.~Anil Kumar, and Achla Marathe.
\newblock The structure of electrical networks: a graph theory based analysis.
\newblock {\em International Journal of Computational Intelligence Systems},
  5(3):265--284, 2009.
\newblock \href {http://dx.doi.org/10.1504/IJCIS.2009.024874}
  {\path{doi:10.1504/IJCIS.2009.024874}}.

\bibitem{aamas2018-gaspers}
Haris Aziz, Serge Gaspers, Edward~J. Lee, and Kamran Najeebullah.
\newblock Defender {S}tackelberg game with inverse geodesic length as utility
  metric.
\newblock In {\em Proceedings of the 17th International Conference on
  Autonomous Agents and Multiagent Systems (AAMAS 2018)}, pages 694--702. ACM,
  2018.

\bibitem{ijcai2017-108}
Haris Aziz, Serge Gaspers, and Kamran Najeebullah.
\newblock Weakening covert networks by minimizing inverse geodesic length.
\newblock In {\em Proceedings of the 26th International Joint Conference on
  Artificial Intelligence, (IJCAI 2017)}, pages 779--785. IJCAI, 2017.
\newblock \href {http://dx.doi.org/10.24963/ijcai.2017/108}
  {\path{doi:10.24963/ijcai.2017/108}}.

\bibitem{Bavelas1950a}
Alex Bavelas.
\newblock Communication patterns in task-oriented groups.
\newblock {\em The Journal of the Acoustical Society of America},
  22(6):725--730, 1950.
\newblock \href {http://dx.doi.org/10.1121/1.1906679}
  {\path{doi:10.1121/1.1906679}}.

\bibitem{Bentley1980}
Jon~Louis Bentley.
\newblock Multidimensional divide-and-conquer.
\newblock {\em Communications of the ACM}, 23(4):214--229, 1980.
\newblock \href {http://dx.doi.org/10.1145/358841.358850}
  {\path{doi:10.1145/358841.358850}}.

\bibitem{Bilo2015}
Davide Bil{\`{o}}, Feliciano Colella, Luciano Gual{\`{a}}, Stefano Leucci, and
  Guido Proietti.
\newblock A faster computation of all the best swap edges of a tree spanner.
\newblock In {\em Proceedings of the 22nd International Colloquium on
  Structural Information and Communication Complexity ({SIROCCO} 2015)}, volume
  9439 of {\em LNCS}, pages 239--253. Springer, 2015.
\newblock \href {http://dx.doi.org/10.1007/978-3-319-25258-2_17}
  {\path{doi:10.1007/978-3-319-25258-2_17}}.

\bibitem{Blum1973}
Manuel Blum, Robert~W. Floyd, Vaughan Pratt, Ronald~L. Rivest, and Robert~E.
  Tarjan.
\newblock Time bounds for selection.
\newblock {\em Journal of Computer and System Sciences}, 7(4):448--461, 1973.
\newblock \href {http://dx.doi.org/10.1016/S0022-0000(73)80033-9}
  {\path{doi:10.1016/S0022-0000(73)80033-9}}.

\bibitem{Bode2007}
Csaba B{\"{o}}de, Istv{\'{a}}n~A. Kov{\'{a}}cs, M{\'{a}}t{\'{e}}~S. Szalay,
  Robin Palotai, Tam{\'{a}}s Korcsm{\'{a}}ros, and P{\'{e}}ter Csermely.
\newblock Network analysis of protein dynamics.
\newblock {\em FEBS Letters}, 581(15):2776--2782, 2007.
\newblock \href {http://arxiv.org/abs/0703025} {\path{arXiv:0703025}}, \href
  {http://dx.doi.org/10.1016/j.febslet.2007.05.021}
  {\path{doi:10.1016/j.febslet.2007.05.021}}.

\bibitem{Bodlaender2016}
Hans~L. Bodlaender, P{\aa}l~Gr{\o}n{\aa}s Drange, Markus~S. Dregi, Fedor~V.
  Fomin, Daniel Lokshtanov, and Michal Pilipczuk.
\newblock A $c^k n$ 5-approximation algorithm for treewidth.
\newblock {\em SIAM Journal on Computing}, 45(2):317--378, 2016.
\newblock \href {http://dx.doi.org/10.1137/130947374}
  {\path{doi:10.1137/130947374}}.

\bibitem{Bringmann2018}
Karl Bringmann, Thore Husfeldt, and M{\aa}ns Magnusson.
\newblock {Multivariate Analysis of Orthogonal Range Searching and Graph
  Distances}.
\newblock In {\em 13th International Symposium on Parameterized and Exact
  Computation (IPEC 2018)}, volume 115 of {\em Leibniz International
  Proceedings in Informatics (LIPIcs)}, pages 4:1--4:13, Dagstuhl, Germany,
  2019. Schloss Dagstuhl--Leibniz-Zentrum fuer Informatik.
\newblock URL: \url{http://drops.dagstuhl.de/opus/volltexte/2019/10205}, \href
  {http://dx.doi.org/10.4230/LIPIcs.IPEC.2018.4}
  {\path{doi:10.4230/LIPIcs.IPEC.2018.4}}.

\bibitem{Cabello2009}
Sergio Cabello and Christian Knauer.
\newblock Algorithms for graphs of bounded treewidth via orthogonal range
  searching.
\newblock {\em Computational Geometry: Theory and Applications},
  42(9):815--824, 2009.
\newblock \href {http://dx.doi.org/10.1016/j.comgeo.2009.02.001}
  {\path{doi:10.1016/j.comgeo.2009.02.001}}.

\bibitem{Chan2010}
Timothy~M. Chan and Mihai P{\v{a}}tra{\c{s}}cu.
\newblock Counting inversions, offline orthogonal range counting, and related
  problems.
\newblock In {\em Proceedings of the 21st Annual {ACM-SIAM} Symposium on
  Discrete Algorithms ({SODA} 2010)}, pages 161--173. SIAM, 2010.
\newblock \href {http://dx.doi.org/10.1137/1.9781611973075.15}
  {\path{doi:10.1137/1.9781611973075.15}}.

\bibitem{Crucitti2003}
Paolo Crucitti, Vito Latora, Massimo Marchiori, and Andrea Rapisarda.
\newblock Efficiency of scale-free networks: error and attack tolerance.
\newblock {\em Physica A: Statistical Mechanics and its Applications},
  320:622--642, 2003.
\newblock \href {http://dx.doi.org/10.1016/S0378-4371(02)01545-5}
  {\path{doi:10.1016/S0378-4371(02)01545-5}}.

\bibitem{Cygnan2012}
Marek Cygan.
\newblock Barricades.
\newblock In Krzysztof Diks, Tomasz Idziaszek, Jakub {\L}{\c{a}}cki, and Jakub
  Radoszewski, editors, {\em Looking for a Challenge}, pages 65--67. University
  of Warsaw, 2012.

\bibitem{Cygan2015}
Marek Cygan, Fedor~V. Fomin, {\L}ukasz Kowalik, Daniel Lokshtanov, D{\'{a}}niel
  Marx, Marcin Pilipczuk, Micha{\l} Pilipczuk, and Saket Saurabh.
\newblock {\em Parameterized Algorithms}.
\newblock Springer, 2015.
\newblock \href {http://dx.doi.org/10.1007/978-3-319-21275-3}
  {\path{doi:10.1007/978-3-319-21275-3}}.

\bibitem{4413000}
David Dagon, Guofei Gu, Christopher~P. Lee, and Wenke Lee.
\newblock A taxonomy of botnet structures.
\newblock In {\em Proceedings of the 23rd Annual Computer Security Applications
  Conference (ACSAC 2007)}, pages 325--339. {IEEE} Computer Society, 2007.
\newblock \href {http://dx.doi.org/10.1109/ACSAC.2007.44}
  {\path{doi:10.1109/ACSAC.2007.44}}.

\bibitem{Flajolet:2009:AC:1506267}
Philippe Flajolet and Robert Sedgewick.
\newblock {\em Analytic Combinatorics}.
\newblock Cambridge University Press, 1st edition, 2009.

\bibitem{Furer2009}
Martin F{\"{u}}rer.
\newblock Faster integer multiplication.
\newblock {\em SIAM Journal on Computing}, 39(3):979--1005, 2009.
\newblock \href {http://dx.doi.org/10.1137/070711761}
  {\path{doi:10.1137/070711761}}.

\bibitem{Furer2014}
Martin F{\"{u}}rer.
\newblock How fast can we multiply large integers on an actual computer?
\newblock In {\em Proceedings of the 11th Latin American Symposium on
  Theoretical Informatics (LATIN 2014)}, volume 8392 of {\em LNCS}, pages
  660--670. Springer, 2014.
\newblock \href {http://arxiv.org/abs/1402.1811} {\path{arXiv:1402.1811}},
  \href {http://dx.doi.org/10.1007/978-3-642-54423-1_57}
  {\path{doi:10.1007/978-3-642-54423-1_57}}.

\bibitem{Hossain2013}
Murad Hossain, Sameer Alam, Tim Rees, and Hussein Abbass.
\newblock Australian airport network robustness analysis : A complex network
  approach.
\newblock In {\em Proceedings of the 36th Australasian Transport Research Forum
  (ATRF 2013)}, pages 1--21, 2013.

\bibitem{Husfeldt2017}
Thore Husfeldt.
\newblock Computing graph distances parameterized by treewidth and diameter.
\newblock In {\em Proceedings of the 11th International Symposium on
  Parameterized and Exact Computation ({IPEC} 2016)}, volume~63 of {\em
  LIPIcs}, pages 16:1--16:11. Schloss Dagstuhl - Leibniz-Zentrum fuer
  Informatik, 2016.
\newblock \href {http://dx.doi.org/10.4230/LIPICS.IPEC.2016.16}
  {\path{doi:10.4230/LIPICS.IPEC.2016.16}}.

\bibitem{Impagliazzo2001a}
Russell Impagliazzo and Ramamohan Paturi.
\newblock On the complexity of k-{SAT}.
\newblock {\em Journal of Computer and System Sciences}, 62(2):367--375, 2001.
\newblock \href {http://dx.doi.org/10.1006/JCSS.2000.1727}
  {\path{doi:10.1006/JCSS.2000.1727}}.

\bibitem{Impagliazzo2001}
Russell Impagliazzo, Ramamohan Paturi, and Francis Zane.
\newblock Which problems have strongly exponential complexity?
\newblock {\em Journal of Computer and System Sciences}, 63(4):512--530, 2001.
\newblock \href {http://dx.doi.org/10.1006/jcss.2001.1774}
  {\path{doi:10.1006/jcss.2001.1774}}.

\bibitem{Jordan1869}
Camille Jordan.
\newblock Sur les assemblages des lignes.
\newblock {\em Journal f{\"{u}}r die {R}eine und {A}ngewandte Mathematik},
  70:185--190, 1869.

\bibitem{Kovacs2015}
Istv{\'{a}}n~A. Kov{\'{a}}cs and Albert-L{\'{a}}szl{\'{o}} Barab{\'{a}}si.
\newblock Destruction perfected.
\newblock {\em Nature}, 524(7563):38--39, 2015.
\newblock \href {http://dx.doi.org/10.1038/524038a}
  {\path{doi:10.1038/524038a}}.

\bibitem{kronecker1882grundzuge}
Leopold Kronecker.
\newblock {\em Grundz{\"{u}}ge einer arithmetischen {T}heorie der algebraischen
  {G}r{\"{o}}ssen}.
\newblock G. Reimer, 1882.

\bibitem{maniu2018experimental}
Silviu Maniu, Pierre Senellart, and Suraj Jog.
\newblock {An Experimental Study of the Treewidth of Real-World Graph Data}.
\newblock In {\em 22nd International Conference on Database Theory (ICDT
  2019)}, volume 127 of {\em Leibniz International Proceedings in Informatics
  (LIPIcs)}, pages 12:1--12:18, 2019.
\newblock \href {http://dx.doi.org/10.4230/LIPIcs.ICDT.2019.12}
  {\path{doi:10.4230/LIPIcs.ICDT.2019.12}}.

\bibitem{Monier1980}
Louis Monier.
\newblock Combinatorial solutions of multidimensional divide-and-conquer
  recurrences.
\newblock {\em Journal of Algorithms}, 1(1):60--74, 1980.
\newblock \href {http://dx.doi.org/10.1016/0196-6774(80)90005-X}
  {\path{doi:10.1016/0196-6774(80)90005-X}}.

\bibitem{Morone2015}
Flaviano Morone and Hern{\'{a}}n~A. Makse.
\newblock Influence maximization in complex networks through optimal
  percolation.
\newblock {\em Nature}, 524(7563):65--68, 2015.
\newblock \href {http://dx.doi.org/10.1038/nature14604}
  {\path{doi:10.1038/nature14604}}.

\bibitem{Najeebullah2018}
Kamran Najeebullah.
\newblock {\em Complexity of Optimally Attacking and Defending a Network}.
\newblock PhD thesis, UNSW Sydney, 2018.

\bibitem{Planken2012}
L{\'{e}}on Planken, Mathijs de~Weerdt, and Roman van~der Krogt.
\newblock Computing all-pairs shortest paths by leveraging low treewidth.
\newblock {\em Journal of Artificial Intelligence Research}, 43:353--388, 2012.
\newblock \href {http://arxiv.org/abs/1401.4609} {\path{arXiv:1401.4609}},
  \href {http://dx.doi.org/10.1613/jair.3509} {\path{doi:10.1613/jair.3509}}.

\bibitem{Schonhage1971}
A.~Sch{\"{o}}nhage and V.~Strassen.
\newblock Schnelle {M}ultiplikation gro{\ss}er {Z}ahlen.
\newblock {\em Computing}, 7(3-4):281--292, 1971.
\newblock \href {http://dx.doi.org/10.1007/BF02242355}
  {\path{doi:10.1007/BF02242355}}.

\bibitem{Shamos1978}
Ian~Michael Shamos.
\newblock {\em Computational geometry.}
\newblock PhD thesis, Yale University, 1978.

\bibitem{Szczepanski2016}
Piotr~L. Szczepa{\'{n}}ski, Tomasz~P. Michalak, and Talal Rahwan.
\newblock Efficient algorithms for game-theoretic betweenness centrality.
\newblock {\em Artificial Intelligence}, 231:39--63, 2016.
\newblock \href {http://dx.doi.org/10.1016/j.artint.2015.11.001}
  {\path{doi:10.1016/j.artint.2015.11.001}}.

\bibitem{Topley2016}
Kevin Topley.
\newblock Computationally efficient bounds for the sum of {C}atalan numbers.
\newblock Technical Report 1601.04223, ArXiv, 2016.
\newblock \href {http://arxiv.org/abs/1601.04223} {\path{arXiv:1601.04223}}.

\bibitem{Veremyev2015}
Alexander Veremyev, Oleg~A. Prokopyev, and Eduardo~L. Pasiliao.
\newblock Critical nodes for distance-based connectivity and related problems
  in graphs.
\newblock {\em Networks}, 66(3):170--195, 2015.
\newblock \href {http://dx.doi.org/10.1002/net.21622}
  {\path{doi:10.1002/net.21622}}.

\bibitem{Wiener1947}
Harry Wiener.
\newblock Structural determination of paraffin boiling points.
\newblock {\em Journal of the American Chemical Society}, 69(1):17--20, 1947.
\newblock \href {http://dx.doi.org/10.1021/ja01193a005}
  {\path{doi:10.1021/ja01193a005}}.

\bibitem{Williams2004}
Ryan Williams.
\newblock A new algorithm for optimal constraint satisfaction and its
  implications.
\newblock In {\em Proceedings of the 31st International Colloquium on Automata,
  Languages and Programming ({ICALP} 2004)}, volume 3142 of {\em LNCS}, pages
  1227--1237. Springer, 2004.
\newblock \href {http://dx.doi.org/10.1007/978-3-540-27836-8_101}
  {\path{doi:10.1007/978-3-540-27836-8_101}}.

\bibitem{Zhou2008}
Bo~Zhou, Xiaochun Cai, and Nenad Trinajsti{\'{c}}.
\newblock On {H}arary index.
\newblock {\em Journal of Mathematical Chemistry}, 44(2):611--618, 2008.
\newblock \href {http://dx.doi.org/10.1007/s10910-007-9339-2}
  {\path{doi:10.1007/s10910-007-9339-2}}.

\bibitem{Zhu2014}
Yihai Zhu, Jun Yan, Yan Sun, and Haibo He.
\newblock Revealing cascading failure vulnerability in power grids using
  risk-graph.
\newblock {\em IEEE Transactions on Parallel and Distributed Systems},
  25(12):3274--3284, 2014.
\newblock \href {http://dx.doi.org/10.1109/TPDS.2013.2295814}
  {\path{doi:10.1109/TPDS.2013.2295814}}.

\end{thebibliography}

%
%
%
%
%

\end{document}